\documentclass[american,sn-mathphys-num,pdflatex]{sn-jnl}
\usepackage[T1]{fontenc}
\usepackage{mathtools}
\usepackage{amsbsy}
\usepackage{amstext}
\usepackage[numbers]{natbib}

\makeatletter
\usepackage{graphicx}%
\usepackage{multirow}%
\usepackage[title]{appendix}%
\usepackage{xcolor}%
\usepackage{textcomp}%
\usepackage{manyfoot}%
\usepackage{booktabs}%
\usepackage{algorithm}%
\usepackage{algorithmicx}%
\usepackage{algpseudocode}%
\usepackage{listings}%

\usepackage{amsmath,amssymb,amsfonts}%
\usepackage{amsthm}%
\usepackage{mathrsfs}%

\newtheorem{theorem}{Theorem}[section]%  meant for continuous numbers

\numberwithin{theorem}{section}
\newtheorem{proposition}[theorem]{Proposition}% 
\newtheorem{corollary}[theorem]{Corollary}% 

\newtheorem{lemma}[theorem]{Lemma}%

\counterwithin*{theorem}{section} 

\newtheorem{problem}{Problem}

\counterwithin*{problem}{section} 
\newtheorem{example}{Example}

\counterwithin*{example}{section}

\theoremstyle{thmstyletwo}%
\newtheorem{remark}{Remark}%

\counterwithin*{remark}{section} 

\theoremstyle{thmstylethree}%
\newtheorem{definition}{Definition}%

\counterwithin*{remark}{definition}

\makeatother

\usepackage{babel}
\begin{document}
\title[Efficient tensor bases for PC]{Efficient tensor bases for pairwise comparisons}

\author*[1]{\fnm{Konrad} \sur{Kułakowski}}\email{konrad.kulakowski@agh.edu.pl}

\author[2]{\fnm{Ryszard} \sur{Smarzewski}}\email{ryszard.smarzewski@wat.edu.pl}
\affil*[1]{\orgname{AGH University of Krakow, Faculty of Electrical Engineering, Automatics,
Computer Science, and Biomedical Engineering }\orgdiv{Applied Computer Science Department}, \orgaddress{\street{al. Mickiewicza 30}, \city{Cracow}, \postcode{30-059},
\country{Poland}}}

\affil[2]{\orgname{Military University of Technology, Faculty of Cybernetics},\orgdiv{Department of Mathematics and Cryptology},
\orgaddress{\street{ul. Kaliskiego 2}, \city{Warsaw}, \postcode{00-908}, \country{Poland}}}
\abstract{In this study, we construct the first orthogonal basis for additively
consistent subspace in pairwise comparisons theory. This construction
is based on our representation of additively consistent best approximations
of skew-symmetric matrices with respect to a tensor basis having minimal
support. The orthogonal basis establishes the logarithmic consistent
projection for the orthogonal windowing of pairwise comparisons matrices.
It is compared with the windowing of the Saaty and SVD types. These
comparisons resulted in new composite formulae for logarithmic, Saaty,
and SVD projections. The theoretical considerations presented in the
paper are accompanied by numerous examples.}

\keywords{pairwise comparisons, consistent subspaces, tensor bases, basis orthogonalization,
matrix approximation, consistent projections}

\maketitle

\section{Introduction}

Multiplicative pairwise comparison matrices constitute the fundamental
mechanism for representing preferences in the Analytic Hierarchy Process,
developed by Thomas L. Saaty \citep{Saaty1977asmf}. Their purpose
is to formally capture a decision-maker’s subjective evaluations as
quantitative relationships between pairs of compared elements, such
as criteria or decision alternatives. The pairwise comparison matrix
$M$ is square and positive, and each of its elements $m_{ij}$ expresses
the relative importance of element i with respect to element j. The
interpretation of these values is quotient-based: the number $m_{ij}$
indicates how many times one element is more important than the other.
Consequently, the matrix is multiplicative in nature, since its elements
can be interpreted as quotients of unknown weights $w_{i}$, i.e.,
$m_{ij}=x_{i}/x_{j}$. An important property of these matrices is
the reciprocity of their elements, meaning that $m_{ij}=1/m_{ji}$.
Thanks to this, the information contained in the matrix is directionally
consistent: if one element is rated as more important than another,
the inverse relationship is unambiguously defined. The diagonal of
the matrix consists of ones, because each element is compared with
itself. In the case of ideal consistency of ratings, the condition
of multiplicative transitivity also holds, according to which the
relationships between three elements should satisfy the equation $m_{ij}\cdot m_{jk}=m_{ik}$.
In practice, however, expert ratings rarely satisfy this condition
exactly, leading to some inconsistency. 

We can naturally convert multiplicative matrices to additive matrices
by taking logarithms. After this transformation, the elements of matrix
$M=[m_{ij}]$ become $\ln m_{ij}$, which is close to the difference
of the logarithms of the weight vector elements, i.e., $m_{ij}\approx\ln x_{i}-\ln x_{j}$.
Thanks to this transformation, it becomes possible to use linear algebra
to describe and solve the problem of finding a priority vector from
the comparative data encoded in the matrix. This problem can be solved
in many ways, ranging from optimization methods \citep{Lootsma1985peon,Bozoki2010ooco}
and iterative inconsistency-reduction algorithms \citep{Koczkodaj2015fcod}
to algebraic transformations, as presented in this paper. 

Consistency plays a central role in the analytic hierarchy process
(AHP) \citep{Kulakowski2020utahp}. Within the framework of AHP, one
approximates a given matrix $M$ in the group ${\mathcal{M}}_{n}=({\mathcal{M}}_{n},\cdot)$
of all $\text{PC}$ matrices endowed with the Birkhoff dot product
by an appropriate consistent matrix:

\noindent
\[
x\otimes x^{-1}\overset{\textit{df}}{=}{\left[{x_{i}}/{x_{j}}\right]}^{n}_{i,j=1},\ \ x={\left(x_{1},\dots,x_{n}\right)}^{T}\in{\mathbb{R}}^{n}_{+},
\]

\noindent from the subgroup $\mathcal{C}_{n}\subset\mathcal{M}_{n}$
of all consistent matrices. For this purpose, several approximation
methods can be found in the literature. The earliest and most important
methods include: the eigenvector method according to Saaty \citep{Saaty1977asmf,Saaty1980tahp},
the least-squares method\textit{ }presented by Chu et al. \citep{Chu1979acot},
the logarithmic least-squares method introduced by Lootsma \citep[p. 96]{Lootsma1985peon}
and studied by van Laarhoven and Pedrycz \citep{VanLaarhoven1983afeo},
Crawford and Williams \citep{Crawford1985anot}, as well as Koczkodaj
and Orłowski \citep{Koczkodaj1997aobf}, and finally the singular
value decomposition method presented by Gass and Rapcsak \citep{Gass2004svdi}.
We note that some interesting comparative analyses of the above approximation
methods were given by Bozoki and Rapcsak \citep{Bozoki2008osak},
Gass and Rapcsak \citep{Gass2004svdi}, and Golany and Kress \citep{Golany1993ameo}.
Since then, several new papers on consistent approximation of $\text{PC}$
matrices have emerged within AHP. For example, we mention a few recent
papers presented by Bozoki \citep{Bozoki2014iwfp}, Koczkodaj et al.
\citep{Koczkodaj2020oopo}, Liu et al. \citep{Liu2019atma}, Smarzewski
and Rutka \citep{Smarzewski2020cpai}, Smarzewski and Kozera \citep{Kozera2022ccai},
and Magnot et al. \citep{Magnot2023agmf}. We note that the third
paper contains an excellent review of existing methods for measuring
inconsistency in AHP.

\noindent It should be noted that Saaty's eigenvector method is widely
used in applications of $AHP$ in many areas, including psychology,
product management, strategic planning, finance and banking, market
research, and others \citep{Golany1993ameo,Saaty1980tahp,Sipahi2010tahp,Ho2018tsot,Hyde2012maot}.
This is still the case in spite of criticism of the eigenvector approach
to pairwise comparisons found in the papers of Blanquero et al. \citep{Blanquero2006iewf},
Gass and Rapcsak \citep{Gass2004svdi}, Bozoki \citep{Bozoki2014iwfp},
Bozoki and Rapcsak \citep{Bozoki2008osak}, and the references therein.
Janicki with Koczkodaj and later with Zhai \citep{Janicki2011ropc,Janicki1996awoa}
propose a shift from the classical, numerical pairwise comparison
matrix to a non-numerical representation based on partial order relations,
in which qualitative preference relations replace numerical values.
This transformation involves mapping the matrix's value intervals
to discrete relations (e.g., \textquotedbl slightly better,\textquotedbl{}
\textquotedbl significantly better\textquotedbl ), yielding a relational
rather than a metric structure. An important result is that for consistent
matrices, one can construct a corresponding consistent system of partial
order relations, demonstrating the consistency of both approaches
at the structural level. At the same time, the non-numerical approach
eliminates the need for precise preference scaling, making it more
appropriate for qualitative evaluations. In \citep{Janicki2018fcwa},
Janicki expands on his earlier relational approach by integrating
qualitative and quantitative models into a unified procedure for determining
weights and rankings. The second work proposes a practical algorithm
that combines multiplicative, additive, and relational comparisons,
supplemented by mechanisms to reduce inconsistencies and iteratively
reconcile ratings. Chu contributed the paper \citep{Chu1998otoc},
which focuses on the problem of finding the optimal approximation
of an inconsistent pairwise comparison matrix by a consistent matrix
in the least-squares sense. The author shows that, after transitioning
to an additive (logarithmic) representation, the space of consistent
matrices has a linear structure, thereby significantly simplifying
the mathematical analysis and computations. Chu derives the optimality
conditions and proposes effective numerical methods (including quasi-Newton
methods) for solving the approximation problem. Chu's study serves
as an important link between classical AHP approaches, optimization
tools, and linear algebra. Fedrizzi, Brunelli and Caprila \citep{Fedrizzi2020tlao}
presents a framework for pairwise comparison matrices within the context
of linear algebra, using an additive representation of preferences.
The authors analyze three related vector spaces: the space of all
comparison matrices, its subspace of consistent matrices, and its
orthogonal complement, referred to as “totally inconsistent” matrices.
In their paper, the authors also present the decomposition of any
matrix into a consistent and inconsistent part, along with the interpretation
of the coordinates of this decomposition. At the end of the paper,
the authors analyze Csató transformation \citep{Csato2019acot} in
terms of coordinates and orthogonal projections. 

In our work, as in \citep{Chu1998otoc,Fedrizzi2020tlao}, we start
from the same observation that the logarithmic transformation allows
us to apply linear algebra to the analysis of pairwise comparison
matrices. In our discussion, however, we go a step further and construct
its first explicit orthogonal basis, derive closed-form expressions
for the orthogonal projection, and show how this construction leads
to new formulas for the logarithmic projection, as well as to a comparison
of it with the Saaty projection and the SVD projection.

\noindent In our study, we consider the classical Gram-Schmidt orthogonalization
problem \citep{Deutsch2001baii} for the linear $\left(n-1\right)$
-- dimensional space $\mathcal{A}_{n}=\ln\mathcal{C}_{n}$ of all
additively consistent matrices in $\mathbb{R}^{n\times n}$:

\noindent
\[
x\ominus x\overset{\textit{df}}{=}{\left[x_{i}{-x}_{j}\right]}^{n}_{i,j=1},\ \ x={\left(x_{1},\dots,x_{n}\right)}^{T}\in{\mathbb{R}}^{n},
\]

\noindent endowed with the weighted Frobenius inner product, 
\[
\left\langle A,B\right\rangle =\sum^{n}_{i,j=1}\varrho_{i}\varrho_{j}(x_{i}-x_{j})(y_{i}-y_{j}),
\]
where $\varrho=\left(\varrho_{1},\dots,\varrho_{n}\right)^{T}$ is
a given vector of positive weights. It can be formulated as follows. 
\begin{problem}
\label{prob:problem-1-1}Given a basis $\left\{ A_{1},\dots,A_{n-1}\right\} $
of ${\mathcal{A}}_{n}$, find orthogonal basis $\left\{ B_{1},\dots,B_{n-1}\right\} $
of ${\mathcal{A}}_{n}$ such that 
\[
\textrm{span}\left\{ A_{1},\dots,A_{k}\right\} =\textrm{span}\left\{ B_{1},\dots,B_{k}\right\} ,
\]
for each $k=1,\dots,n-1.$ Moreover, find useful formulae for numerical
calculation of the orthogonal basis $\left\{ B_{k}\right\} $.
\end{problem}
\noindent It is well-known that (the first part) of Problem \ref{prob:problem-1-1}
has a unique solution $\left\{ B_{k}\right\} $, up to nonzero multiplicative
constants. The solution can be computed by recurrence formulae due
to Gram-Schmidt, which almost never are efficient numerically. In
order to illustrate the requirement of efficiency in Problem \ref{prob:problem-1-1},
we mention that Gram-Schmidt recurrence relations for monomial basis
$\left\{ 1,x,\dots,x^{n}\right\} $ are considerably improved by more
efficient numerically three-term recurrence relations for orthogonal
polynomial bases \citep{Deutsch2001baii}.

\section{Main results}

\noindent In this paper, we study Problem \ref{prob:problem-1-1}
for the basis $\left\{ A_{1},\dots,A_{n-1}\right\} $ of $\mathcal{A}_{n}$.
In the unweighted case, this is a long standing problem due to Koczkodaj
and Orłowski, who investigated it in the paper \citep{Koczkodaj1997aobf}
on the logarithmic least-squares method in pairwise comparisons.

\noindent Recently, this problem has been considered in \citep{Koczkodaj2020oopo,Smarzewski2020cpai,Benitez2024ceof},
but efficient explicit expressions for the orthogonal basis $\left\{ B_{1},\dots,B_{n-1}\right\} $
have not been found until now. The present successful attempt has
been possible due to a tensor description of the explicit weighted
formulae \citep{Koczkodaj2020oopo,Smarzewski2020cpai} for best additively
consistent approximations of skew-symmetric matrices.

\noindent More precisely, in Theorem \ref{th:every-additively-consistent-matrix-4-2},
we focus on main properties of the consistent tensor basis $\left\{ A_{1},\dots,A_{n-1}\right\} $
of ${\mathcal{A}}_{n}$ having minimal support:
\[
A_{k}=e_{k}\ominus e_{k}\ \ \ \left(k=1,\dots,n-1\right),
\]

\noindent while in Corollary \ref{cor:For-any-skew-symmetric-5-4}
we solve Problem \ref{prob:problem-1-1} by giving the following most
efficient orthogonal consistent tensor basis $\left\{ B_{1},\dots,B_{n-1}\right\} $
of $\mathcal{A}_{n}$:
\[
B_{k}=x^{k}\ominus x^{k}\,\,\,\left(k=1,\dots,n-1\right),
\]

\noindent where $x^{k}={\left(x^{k}_{1},\dots,x^{k}_{n}\right)}^{T}$
are vectors in ${\mathbb{R}}^{n}$ defined by

\noindent
\[
x^{k}_{1}=\dots=x^{k}_{k-1}=\frac{1}{n-k+1},\ \ x^{k}_{k}=1,\ \ x^{k}_{k+1}=\dots=x^{k}_{n}=0.
\]

\noindent It should be noted that Koczkodaj and Orłowski \citep[p. 43]{Koczkodaj1997aobf}
discovered experimentally the following basis ${\hat{A}}_{k}=[{\hat{a}}^{k}_{ij}]$
of ${\mathcal{A}}_{n}$ defined by

\noindent
\[
\hat{a}^{k}_{ij}=\left\{ \begin{array}{c}
1,\ \ \mathrm{for}\ 1\le i\le k<j\le n,\\
-1,\ \mathrm{\ for}\ 1\le j\le k<i\le n,\\
\ \ 0,\ \ \ \ \mathrm{\textrm{otherwise}},\ \ \ \ \ \ \ \ \ \ \ \ \ \ \ \ \ \ \ \ \ 
\end{array}\right.
\]

\noindent which does not have the minimal supports:

\noindent
\[
\text{supp}\left({\hat{A}}_{k}\right)=\left\{ \left(i,j\right):{\hat{a}}^{k}_{ij}\neq0\right\} ,A\in{\mathbb{R}}^{n\times n}.
\]

\noindent In fact, each basis matrix ${\hat{A}}_{k}$ has $2k\left(n-k\right)$
entries ${\hat{a}}^{k}_{ij}\neq0$, while $A_{k}$ has only $2\left(n-1\right)$
entries different from zero. For this basis, the authors of \citep{Koczkodaj1997aobf}
claimed that a solution of Problem \ref{prob:problem-1-1} consists
of orthogonal matrices ${\hat{B}}_{k}$ satisfying recurrence relations
\[
{\hat{B}}_{k}={\hat{A}}_{k}-\frac{n-k}{n-k+1}{\hat{A}}_{k-1}\ \left(k=1,\dots,n-1\right),
\]
where ${\hat{A}}_{0}$ is a matrix with all zero elements. However,
they were only able to prove these formulae for an easy case $n=4.$
We conjecture that the recurrence relations are incorrect for higher
values of $n$.
\begin{remark}
\noindent The tensor basis $\left\{ A_{1},\dots,A_{n-1}\right\} $
of ${\mathcal{A}}_{n}$ was already proposed by Chu \citep[p. 158]{Chu1998otoc}.
Fedrizzi, Brunelli, and Caprila also noticed this and incorporated
it into their work \citep[p. 192]{Fedrizzi2020tlao}.
\end{remark}
\noindent In this paper the authors characterized the orthogonal direct
sum decomposition $S_{n}={\mathcal{A}}_{n}\bigoplus{\mathcal{A}}^{\bot}_{n}$
of $\frac{1}{2}n\left(n-1\right)$ -- dimensional space $S_{n}$
of a all skew-symmetric matrices in $\mathbb{R}^{n\times n}$, endowed
with the weighted Frobenius inner product. For this purpose, the authors
introduced two suitable bases for spaces ${\mathcal{A}}_{n}$ and
$\mathcal{A}^{\bot}_{n}$. Although both these bases are orthogonal
with respect to some inner products \citep{Fedrizzi2020tlao}, it
does not help to solve Problem \ref{prob:problem-1-1} neither for
basis $\left\{ \hat{A}_{1},\dots,\hat{A}_{n-1}\right\} $ nor for
simpler basis $\left\{ A_{1},\dots,A_{n-1}\right\} .$ It has been
already noticed by Benitez, Koczkodaj and Kowalczyk \citep{Benitez2024ceof},
who reconsidered characterization of orthogonal direct sum $S_{n}={\mathcal{A}}_{n}\bigoplus{\mathcal{A}}^{\bot}_{n}$
, but were not able to simplify the original recurrence relations
of Gram-Schmidt\textbf{. }

\noindent It should be noticed that our construction of an orthogonal
basis has shed new light not only on the orthogonal projection \citep{Koczkodaj2020oopo}
of PC matrices, but also on the Saaty \citep{Saaty1977asmf} and SVD
projections. The third projection has been inspired by an interesting
paper of Gass and Rapcsak \citep{Gass2004svdi}. These three kinds
of PC projections are studied both from theoretical and computational
points of view in Sections \ref{sec:Consistent-approximation-of},
\ref{sec:Orthogonal-windowing-of}, \ref{sec:Saaty-windowing-of}
and \ref{sec:SVD-windowing-of}. It is clear that each of them directly
yields a pairwise comparisons method. In the following sections, it
is called windowing. To be more precise, let ${\mathcal{R}}_{\mathcal{C}}$,\textit{
}${\mathcal{P}}_{\mathcal{C}}{\mathrm{=exp}({\mathcal{P}}_{\mathcal{A}}({\ln M\ }))\ }$
and ${\mathcal{T}}_{\mathcal{C}}$ denote respectively the eigenvector,
logarithmic and SVD projections of ${\mathcal{M}}_{n}$ onto ${\mathcal{C}}_{n}.$
Then we show (Theorem \ref{thm:For-any-matrix-6-2}) the following
windowing formula:

\noindent
\[
{\mathcal{P}}_{\mathcal{C}}M=\prod^{n-1}_{k=1}{{\ [M,B_{k}]}^{\ \frac{n-k+1}{2n(n-k)}B_{k}}}\ \ \ 
\]

\noindent for an orthogonal projection of $PC$ matrices $M$ in ${\mathcal{M}}_{n}$.
Moreover, we prove that the SVD projection ${\mathcal{T}}_{\mathcal{C}}$
is a composite projection of the form

\noindent
\[
{\mathcal{T}}_{\mathcal{C}}M={\mathcal{R}}_{\mathcal{C}}\left({\mathcal{T}}_{\mathcal{G}}M\right)={\mathcal{P}}_{\mathcal{C}}\left({\mathcal{T}}_{\mathcal{G}}\mathrm{\ }\mathrm{M}\right),\ M\in{\mathcal{M}}_{n},
\]

\noindent where ${\mathcal{T}}_{\mathcal{G}}$ is the tensor projections
of ${\mathcal{M}}_{n}$ onto the group ${\mathcal{G}}_{n}=\left({\mathcal{G}}_{n},\cdot\right)$
of tensor products $x\otimes y=xy^{T}$ of positive column vectors
$x,y\in{\mathbb{R}}^{n}.$ Hence, it is of interest that the SVD projection
is a unique pairwise comparisons method which is both norm and eigenvector
based. This fact follows from the following independently interesting
identity:

\noindent
\[
{\mathcal{R}}_{\mathcal{C}}\left(x\otimes y\right)=\mathcal{T}_{\mathcal{C}}\left(x\otimes y\right),
\]

\noindent which is proved in Lemma \ref{lem:lem-8-6}. In Sections
\ref{sec:Saaty-windowing-of} and \ref{sec:SVD-windowing-of} there
are other similar results, e.g., for normal, stochastic and doubly
stochastic matrices. They all support the conclusion of Golany and
Kres \citep{Golany1993ameo} that none of the windowing methods is
better in all cases. Thus, it is an additional argument against the
claim of Saaty and Vargas \citep{Saaty1984coel} about the absolute
superiority of the eigenvector method.

\noindent Further theoretical studies along these lines are needed,
as well as numerical comparisons of orthogonal and Saaty windowing
with the SVD windowing of PC matrices in real-world applications.
Of course, new numerical experiments may take into account many existing
comparisons of the logarithmic least squares method with the eigenvector
method presented in hundreds of papers and books, for example, in
\citep{Kulakowski2022otsb,Bozoki2008osak,Golany1993ameo,Golden1989tahp,Liu2020dpfp,Saaty1980tahp,Saaty1984coel,Sipahi2010tahp}.
These references contain several examples of AHP's real-world applications.

At a fundamental level, projections onto the subspace of consistent
matrices can be interpreted as weighting methods. This follows from
the well-known one-to-one correspondence between consistent pairwise
comparison matrices and priority vectors: each consistent matrix uniquely
determines a weight vector (each column of the consistent PC matrix,
after appropriate rescaling, is equivalent to a priority vector),
and conversely, each weight vector $w=[w_{1},\ldots,w_{n}]^{T}$ uniquely
induces a consistent matrix $A=[a_{ij}]$, where $a_{ij}=w_{i}/w_{j}$.
Therefore, any procedure that projects an inconsistent matrix onto
the set of consistent matrices can naturally be understood as a method
for deriving weights. From this perspective, projection-based approaches
provide a unifying framework for understanding and computing priority
vectors. In light of the above, a number of the results presented
here, despite their formal nature, have practical applications. In
particular, Theorem \ref{the:theorem-3-2} provides an explicit solution
to the problem of logarithmic approximation in the Frobenius norm
via a priority vector. Theorem \ref{thm:thm-5-3}, together with Corollary
\ref{cor:For-any-skew-symmetric-5-4}, constructs an explicit basis
and a closed-form formula for the orthogonal projection $\mathcal{P}_{\mathcal{A}}$,
which can be computed via a finite sum of scalar products -- this
is a direct algorithm of polynomial complexity. Similarly, Theorem
\ref{thm:For-any-matrix-6-2} provides a closed-form expression for
the projection $\mathcal{P}_{\mathcal{C}}$ in the multiplicative
model as a product of factors dependent on the input matrix, which
eliminates the need for iterative methods and allows for a stable
numerical implementation. Finally, Theorem \ref{thm:thm-8-2} shows
that the SVD projection can be understood as a composition of two
simpler projections, which paves the way for a modular implementation
of the solution.

\section{Tensor notation of consistent matrices\label{sec:tensor-notation-of}}

\noindent In this paper, all vectors are understood as column vectors.
Let

\noindent
\[
x=\left(x_{1},\dots,x_{n}\right)^{T}\ \ \text{and}\ \ y=\left(y_{1},\dots,y_{n}\right)^{T}
\]

\noindent be two vectors in ${\mathbb{R}}^{n},$ where $T$ means
transpose. Then the statement $x<y$ means that $x_{i}<y_{i}\ \left(i=1,\dots,n\right).$
Moreover, their inner product $\left\langle \cdot,\cdot\right\rangle :{\mathbb{R}}^{n}{\mathrm{\times}\mathbb{R}}^{n}\longrightarrow\mathbb{R}$,
tensor product ${\otimes\mathrm{:}\mathbb{R}}^{n}\times{\mathbb{R}}^{n}\longrightarrow{\mathbb{R}}^{n\times n}$
and tensor sum $\oplus\mathrm{:}\mathbb{R}^{n}\times\mathbb{R}^{n}\longrightarrow\mathbb{R}^{n\times n}$
are defined as follows:

\noindent
\[
\left\langle x,y\right\rangle \overset{\textit{df}}{=}x^{T}y=\ x_{1}y_{1}+\dots+x_{n}y_{n},
\]

\begin{equation}
x\otimes y\overset{\textit{df}}{=}xy^{T}=\left[\begin{array}{ccc}
x_{1}y_{1} & \cdots & x_{1}y_{n}\\
\vdots & \ddots & \vdots\\
x_{n}y_{1} & \cdots & x_{n}y_{n}
\end{array}\right],\label{eq:tensor-product-def}
\end{equation}

\[
x\oplus y\overset{\textit{df}}{=}\left[\begin{array}{ccc}
x_{1}+y_{1} & \cdots & x_{1}+y_{n}\\
\vdots & \ddots & \vdots\\
x_{n}+y_{1} & \cdots & x_{n}+y_{n}
\end{array}\right].
\]

\noindent Note that the tensor product and sum represent endomorphisms
acting on $\mathbb{R}^{n}$, i.e., 

\noindent
\[
\left(x\otimes y\right)\left(z\right)=xy^{T}z=x\left\langle y,z\right\rangle =\left\langle z,y\right\rangle x,
\]

\[
\left(x\oplus y\right)\left(z\right)=\left\langle z,e\right\rangle x+\left\langle z,y\right\rangle e,
\]

\noindent where $z\in{\mathbb{R}}^{n}$ and $e={\left(1,\dots,1\right)}^{T}\in{\mathbb{R}}^{n}.$

\noindent A positive matrix $M=\left[m_{ij}\right]$ in ${\mathbb{R}}^{n\times n}$
is said to be a PC matrix if it is reciprocal, i.e., if we have

\noindent
\begin{equation}
m_{ij}=1/m_{ji}\ \ \ \ (i,j=1,\dots,n).\label{eq:reciprocal-matrix-def}
\end{equation}

\noindent If a PC matrix $C=\left[c_{ij}\right]\in{\mathbb{R}}^{n\times n}$
satisfies the transitivity condition

\noindent
\begin{equation}
c_{ij}=c_{ik}c_{kj}\ \ \ \left(1\le i<k<j\le n\right),\label{eq:consistency-cond-def}
\end{equation}

\noindent then it is called consistent. Denote the set of all consistent
matrices by ${\mathcal{C}}_{n}$ and note that ${\mathcal{C}}_{n}=({\mathcal{C}}_{n},\cdot)$
is a subgroup of the abelian group ${\mathcal{M}}_{n}=({\mathcal{M}}_{n},\cdot)$
of all $PC$ matrices, where the dot is the Hadamard product (entry-wise
product) of two matrices $M=\left[m_{ij}\right]$ and $N=\left[n_{ij}\right]$
defined by $M\cdot N=\left[m_{ij}n_{ij}\right].$ Similarly, we define
the Hadamard quotient (entry-wise quotient) ${x}/{y={({x_{1}}/{y_{1},\dots,}{x_{n}}/{y_{n}})}^{T}}$
of two vectors $x=\left(x_{1},\dots,x_{n}\right)^{T}$ and $y=\left(x_{1},\dots,x_{n}\right)^{T}>0$.

\noindent It is well-known that a matrix $C=\left[c_{ij}\right]$
is consistent if and only if there exists a positive vector $x\,$
such that $C=[{x_{i}}/{x_{j}]}.$ The vector $x\,$ is uniquely determined
by the consistent matrix $C,$ up to a positive constant factor, and
is known as the priority vector of $C$. Thus a consistent matrix
$C\in{\mathcal{C}}_{n}$ can be rewritten as the following tensor
product or tensor quotient:

\noindent
\begin{equation}
C=x\otimes x^{-1}=x\oslash x=[x_{i}/x_{j}],\label{eq:tensor-quotient}
\end{equation}

\noindent where $x={\left(x_{1},\dots,x_{n}\right)}^{T}$ is the priority
vector of $C$ and $x^{-1}=\left(1/x_{1},\dots,1/x_{n}\right)^{T}$
is the group inverse of $x\,$ in ${\mathcal{M}}_{n}=\left({\mathcal{M}}_{n},\cdot\right).\ $
Note that if we choose $x_{n}=\alpha>0$, then the remaining components
of the priority vector $x$ are equal to $x_{i}=\alpha c_{in}.$

\noindent The multiplicative group ${\mathcal{M}}_{n}=({\mathcal{M}}_{n},\cdot)$
is isomorphic to the additive group ${\mathcal{S}}_{n}=({\mathcal{S}}_{n},+)$
of all skew-symmetric matrices $S=[s_{ij}]$ in ${\mathbb{R}}^{n\times n}$
endowed with the usual entry-wise addition of matrices. Since (\ref{eq:reciprocal-matrix-def})
is equivalent to the equality $\ln m_{ij}=-\ln m_{ji}$, it follows
that the matrix logarithmic function defined by $S=\ln M=\left[\ln m_{ij}\right]$
is a $\mathrm{group\ isomorphism}$ of ${\mathcal{M}}_{n}$ onto ${\mathcal{S}}_{n}$
and that the matrix exponential function $M=\exp S=\left[\exp s_{ij}\right]$
is its inverse. The group ${\mathcal{S}}_{n}=({\mathcal{S}}_{n},+)$
contains a subgroup $\mathcal{A}_{n}=(\mathcal{A}_{n},+)$ of all
additively consistent matrices $A=[a_{ij}]$ defined by the condition

\noindent
\begin{equation}
a_{ij}=a_{ik}+a_{kj}\ \ \ (1\le i<k<j\le n).\label{eq:additive-consistency}
\end{equation}

\noindent It is clear that a skew-symmetric matrix $A=\left[a_{ij}\right]$
is additively consistent if and only if there exists a column vector
$y={\left(y_{1},\dots,y_{n}\right)}^{T}$such that the matrix $A$
can be represented as the following tensor sum or tensor difference:\textit{ }

\noindent
\begin{equation}
A=y\oplus\left(-y\right)=y\circleddash y=[y_{i}-y_{j}],\label{eq:add-consistent-matrix-as-tensor-difference}
\end{equation}

\noindent where $-y={\left(-y_{1},\dots,-y_{n}\right)}^{T}$ is the
group inverse of $y$ in ${\mathcal{S}}_{n}=\left({\mathcal{S}}_{n},+\right).$
The vector $y$ is uniquely determined by $A,$ up to an additive
constant, and is known as the additive priority vector\textit{ }of
$A.\ $ If we choose $y_{n}=\beta$, then the remaining components
of $y$ are $y_{i}=\beta+a_{in}.$

\noindent We note that if a matrix $A=y\circleddash y=\left[y_{i}-y_{j}\right]$
is additively consistent, then the matrix $C=\exp A=[\exp y_{i}/\exp y_{j}]$
is consistent. Conversely, if a matrix $C=x\oslash x=[{x_{i}}/{x_{j}}]$
is consistent, then the matrix $A=\ln C=\left[\ln x_{i}-\ln x_{j}\right]$
is additively consistent. Thus subgroups ${\mathcal{A}}_{n}=({\mathcal{A}}_{n},+)$
and ${\mathcal{C}}_{n}=({\mathcal{C}}_{n},\cdot)$ are isomorphic,
under the entry-wise logarithmic or exponential functions. We denote
them by using the symbols $\mathcal{A}_{n}=\ln\mathcal{C}_{n}$ and
$\mathcal{C}_{n}=\exp\mathcal{A}_{n}$.

\section{Consistent approximation of PC matrices\label{sec:Consistent-approximation-of}}

\noindent Within the framework of AHP, one approximates a given PC
matrix $M$ from the multiplicative group ${\mathcal{M}}_{n}=({\mathcal{M}}_{n},\cdot)$
by an appropriate consistent matrix $C=x\oslash x$ from the subgroup
$\mathcal{C}_{n}=\left(\mathcal{C}_{n},\cdot\right)$. For this purpose,
we suppose that $\left\Vert \cdot\right\Vert $ is a matrix norm in
${\mathbb{R}}^{n\times n}$ and define the distance from $M\in{\mathcal{M}}_{n}$
to ${\mathcal{C}}_{n}$ by

\noindent
\begin{equation}
\text{dist}\left(M,\mathcal{C}_{n}\right)=\mathop{\mathrm{inf}}_{x>0}\left\Vert M-x\oslash x\right\Vert .\label{eq:matrix-distance}
\end{equation}

\begin{problem}
\noindent\label{prob:Given-a-PC-3-1}Given a $PC$ matrix $M$ in
the group ${\mathcal{M}}_{n}=\left({\mathcal{M}}_{n},\cdot\right)$,
find a matrix $\hat{x}\oslash\hat{x}$ in the subgroup ${\mathcal{C}}_{n}=\left({\mathcal{C}}_{n},\cdot\right)$
such that

\noindent
\begin{equation}
\left\Vert M-\hat{x}\oslash\hat{x}\right\Vert =\text{dist}\left(M,\mathcal{C}_{n}\right).\label{eq:eq-3-2}
\end{equation}
\end{problem}
\noindent The matrix $C=\hat{x}\ \oslash\hat{x}$ is called a best
consistent approximation, or a nearest consistent matrix from ${\mathcal{C}}_{\mathrm{n}}$
to $M.$ Although the set ${\mathcal{C}}_{n}$ is neither bounded
nor closed in ${\mathbb{R}}^{n\times n},$ the consistent approximation
problem is always solvable. In fact, we have
\begin{theorem}
\noindent\label{th:on-best-consistent-approx}If $M\in{\mathcal{M}}_{n}$,
then there exists a best consistent approximation $C=\hat{x}\oslash\hat{x}$
from ${\mathcal{C}}_{\mathrm{n}}$ to $M.$
\end{theorem}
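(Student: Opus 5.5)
The plan is a standard compactness argument for the distance function $f(x)=\left\Vert M-x\oslash x\right\Vert $ on $x>0$. Since $x\oslash x$ does not change when $x$ is multiplied by a positive constant, I restrict to the slice $x_{n}=1$. I then set $x=\exp y$ with $y\in\mathbb{R}^{n}$, $y_{n}=0$. The map $y\mapsto\left\Vert M-\exp(y\ominus y)\right\Vert $ is continuous on $\mathbb{R}^{n-1}$, because entrywise exponentials and norms are continuous. It therefore suffices to show that the infimum in (\ref{eq:matrix-distance}) may be taken over a compact set of $y$'s.

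To do this, let $d=\text{dist}(M,\mathcal{C}_{n})$ and note that $d\le\left\Vert M-E\right\Vert $, where $E=e\oslash e$ is the all-ones matrix. Consider the sublevel set $K=\{y:y_{n}=0,\ \left\Vert M-\exp(y\ominus y)\right\Vert \le\left\Vert M-E\right\Vert +1\}$. It is closed by continuity and nonempty since $0\in K$. The key step is boundedness. All norms on $\mathbb{R}^{n\times n}$ are equivalent, so there is $c>0$ with $\max_{i,j}|a_{ij}|\le c\left\Vert A\right\Vert $. Hence for $y\in K$ every entry satisfies
\[
\exp(y_{i}-y_{j})\le m_{ij}+c\left(\left\Vert M-E\right\Vert +1\right)=:R_{ij}.
\]
Taking $j=n$ gives $y_{i}\le\ln R_{in}$. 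Taking $i=n$ gives $-y_{j}\le\ln R_{nj}$. So $K$ lies in a box and is compact.

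The minimum of the continuous function over $K$ is attained at some $\hat{y}$. It is the global infimum, because any $y$ outside $K$ gives a value exceeding $\left\Vert M-E\right\Vert +1>d$. Then $\hat{x}=\exp\hat{y}>0$ satisfies (\ref{eq:eq-3-2}).

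The main obstacle is the non-closedness of $\mathcal{C}_{n}$ noted above. A minimizing sequence $x\oslash x$ could a priori escape to the boundary, where some $x_{i}\to0$ or $\infty$ relative to the others. The entrywise bound handles this by using reciprocity: an entry $x_{i}/x_{j}$ that tends to $0$ forces its mirror entry $x_{j}/x_{i}$ to blow up. This is exactly why both the upper and lower bounds on $y_{i}$ come out of the single upper bound on the entries.
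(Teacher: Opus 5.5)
Your proof is correct and takes the same route as the paper: confine near-minimizers to a sublevel set, show that set is compact, and apply the extreme value theorem. Your normalization $y_n=0$ and the two-sided bound on $y$ obtained from reciprocity also supply what the paper's version glosses over, since its set $G=\{x>0:\left\Vert x\oslash x\right\Vert \le\left\Vert M-e\oslash e\right\Vert +\left\Vert M\right\Vert \}$ is invariant under $x\mapsto tx$ and so is neither bounded nor closed in $\mathbb{R}^n$ as stated.
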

\begin{proof}
By (\ref{eq:eq-3-2}) we have 
\[
\left\Vert \hat{x}\oslash\hat{x}\right\Vert \le\left\Vert M-\hat{x}\oslash\hat{x}\right\Vert +\left\Vert M\right\Vert \le\left\Vert M-e\oslash e\right\Vert +\left\Vert M\right\Vert .
\]

\noindent Hence the vector $\hat{x}$ should belong to a nonempty
bounded closed subset $G$ of all positive vectors $x\in\mathbb{R}^{n}$,
which satisfy the inequality 
\[
\left\Vert x\oslash x\right\Vert \le\left\Vert M-e\ \oslash e\right\Vert +\left\Vert M\right\Vert .
\]

\noindent By the Bolzano$-$Weierstrass theorem \citep{Rusnock2005bauc}
the set $G$ is compact. Thus, the extreme value theorem \citep[p. 89]{Rudin1976poma}
should be applied to conclude that the continuous function 
\[
\varrho\left(x\right)=\left\Vert M-x\oslash x\right\Vert \ \ \ \ (x\in G)
\]

\noindent attains its minimum at some $\hat{x}$.
\end{proof}
\noindent The consistent approximation problem is non-convex, and
so it may have more than one solution \citep{Smarzewski2020cpai}.
Thus, it may be ill--posed numerically. Such a situation does not
occur in the case of the eigenvector method according to Saaty \citep{Saaty1977asmf,Saaty1980tahp}.
More precisely, we have
\begin{remark}
\noindent Let $A=[a_{ij}]$ be a positive matrix in ${\mathbb{R}}^{n\times n},$
and let a positive continuous functional $r_{A}$ be defined on $R^{n}_{+}$
by the formula
\[
r_{A}(x)=\mathop{\max}_{1\le i\le n}\frac{(Ax)_{i}}{x_{i}},\ \ (Ax)_{i}=\sum^{n}_{j=1}a_{ij}x_{j}.
\]

\noindent As in Theorem (\ref{th:on-best-consistent-approx}), we
conclude that this non-subadditive functional $r_{A}$ attains a minimum
at some $\hat{x}$: 
\[
r_{A}(\hat{x})=\mathop{\min}_{x>0}r_{A}(x).
\]

\noindent According to Saaty \citep{Saaty1977asmf,Saaty1980tahp},
a consistent approximation from ${\mathcal{C}}_{\mathrm{n}}$ to $M$
is defined again by the formula $C=\hat{x}\oslash\widehat{x.}$ Saaty's
method is based on the observation (indicated in Perron's theorem)
according to which \citep{Perron1907ctdm,Gantmaher2000ttom} the positive
vector $\hat{x}$ is the unique, up to the positive constant, eigenvector
of the matrix $A$, corresponding to the unique largest eigenvalue
$\lambda_{\max}=r_{A}\left(\hat{x}\right)$ of $A.$
\end{remark}
\noindent It is relatively hard to solve Problem \ref{prob:Given-a-PC-3-1}
even in the simplest case of the Frobenius norm of matrices $X=[x_{ij}]$
in ${\mathbb{R}}^{n\times n}:$ 
\[
\left\Vert X\right\Vert _{F}=\left(\sum^{n}_{i,j=1}x^{2}_{ij}\right)^{1/2}.
\]

\noindent This was already observed by Saaty in \citep{Saaty1980tahp}
and later discussed by Saaty and Vargas in \citep{Saaty1984coel}.
Therefore, following Lootsma \citep{Lootsma1981peon} and van Laarhoven
and Pedrycz \citep{VanLaarhoven1983afeo}, it is reduced to a linear
approximation problem which is well-posed and much easier to solve,
not only in the particular case of the Frobenius norm.
\begin{problem}
\noindent Given a positive matrix $M$ in ${\mathbb{R}}^{n\times n},$
find a matrix $\exp(\hat{y}\circleddash\hat{y})$ in the consistent
group ${\mathcal{C}}_{n}=\left({\mathcal{C}}_{n},\cdot\right)$ such
that the matrix $\hat{y}\ominus\hat{y}$ is nearest to the matrix
$S=\ln M$ from the additively consistent subgroup $\mathcal{A}_{n}$$=\left({\mathcal{A}}_{n},+\right),$
i.e.,

\noindent
\begin{equation}
\left\Vert S-\hat{y}\circleddash\hat{y}\right\Vert =\text{dist}\left(S,\mathcal{A}_{n}\right),\label{eq:eg-3-3}
\end{equation}

\noindent where

\noindent
\[
\text{dist}\left(S,\mathcal{A}_{n}\right)=\mathop{\mathrm{inf}}_{y\in\mathbb{R}^{n}}\left\Vert S-y\circleddash y\right\Vert .
\]

\noindent If we take $w=\mathrm{exp\ (}\hat{y}\ ),$ then the matrix
$C=w\oslash w$ is called the logarithmic consistent approximation
from $\mathcal{C}_{\mathrm{n}}$ to $\mathrm{M.}$
\end{problem}
\noindent Since ${\mathcal{A}}_{n}$ is a subspace of finite dimensional
space ${\mathcal{S}}_{n}$ of all skew--symmetric matrices in $\mathbb{R}^{n\times n}$,
it follows that the additive approximation problem introduced above
has at least one solution $\hat{y}\circleddash\hat{y}$ for every
$S=\ln M.$ This solution is unique, whenever the matrix norm $\left\Vert \cdot\right\Vert $
is strictly convex. In particular, it is true in the case of the weighted
Frobenius norm $\left\Vert X\right\Vert _{F(\varrho)}=\sqrt{\left\langle X,X\right\rangle }$
induced by the weighted Frobenius inner product 
\begin{equation}
\left\langle X,Y\right\rangle =\sum^{n}_{i,j=1}\varrho_{i}\varrho_{j}x_{ij}y_{ij},\label{eq:weighted-frobenius-inner-product}
\end{equation}

\noindent where $\varrho={\left({\varrho}_{1},\dots,{\varrho}_{n}\right)}^{T}$
is a given vector of positive weights.

\noindent The weighted Frobenius norm ${\left\Vert M\right\Vert }_{F(\varrho)}$
coincides with the Frobenius norm ${\left\Vert S\cdot M\right\Vert }_{F(\varrho)}$
of the Hadamard product $S\cdot M$, where $S=\sqrt{\varrho\otimes\varrho}$
is a scaling matrix. This scaling method is two sided. Indeed, we
have 
\[
S\cdot M=DMD,
\]

\noindent where $D=\text{diag}\left(\sqrt{\varrho_{1}},\dots,\sqrt{\varrho_{n}}\right)$
is a diagonal matrix. The solution $w\oslash w$ of the logarithmic
consistent problem can be expressed in terms of generalized geometric
means. In fact, we have
\begin{theorem}
\noindent\citep{Koczkodaj2020oopo,Smarzewski2020cpai} \label{the:theorem-3-2}Let
${\left\Vert \cdot\right\Vert }_{F(\varrho)}$ be a weighted Frobenius
norm and let $M=[m_{ij}]$ be a positive matrix in ${\mathbb{R}}^{n\times n}.$
Then the solution $C=w\oslash w$ of the logarithmic consistent problem
is determined by a priority vector $w={(w_{1},\dots,w_{n})}^{T}$
such that 
\begin{equation}
w_{i}=\left(\prod^{n}_{k=1}m^{\varrho_{k}}_{ik}\right)^{1/\left|\varrho\right|}\,\,\text{for}\,\,i=1,\dots,n,\label{eq:eq-3-5}
\end{equation}

\noindent where $\left|\varrho\right|={\varrho}_{1}+\dots+{\varrho}_{n}.$ 
\end{theorem}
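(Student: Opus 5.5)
Since the problem is a linear least-squares problem in the additive model, I would solve it by the normal equations. Put $S=\ln M=[s_{ij}]$ with $s_{ij}=\ln m_{ij}$, and minimize the strictly convex quadratic function
\[
F(y)=\left\Vert S-y\ominus y\right\Vert^{2}_{F(\varrho)}=\sum^{n}_{i,j=1}\varrho_{i}\varrho_{j}\left(s_{ij}-y_{i}+y_{j}\right)^{2},\ \ y\in\mathbb{R}^{n}.
\]
$F$ is strictly convex only modulo constant shifts $y\mapsto y+ce$, since these leave $y\ominus y$ unchanged. The minimizers are therefore exactly the critical points, and they form a single coset $\hat{y}+\mathbb{R}e$. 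This is consistent with the uniqueness of the best approximation from the subspace $\mathcal{A}_{n}$ in the strictly convex norm, noted before the theorem. It therefore suffices to exhibit one critical point and show it equals $\ln w$ with $w$ given by (\ref{eq:eq-3-5}).

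Next I would compute the partial derivatives. The variable $y_{p}$ occurs in the terms with $i=p$ (with sign $-$) and with $j=p$ (with sign $+$). Using $s_{jp}=-s_{pj}$ and the symmetry of the weights, the two contributions coincide, and we get
\[
\frac{\partial F}{\partial y_{p}}=-4\varrho_{p}\sum^{n}_{k=1}\varrho_{k}\left(s_{pk}-y_{p}+y_{k}\right).
\]
Setting this to zero and dividing by $\varrho_{p}>0$ gives, for $p=1,\dots,n$,
\[
\left|\varrho\right|y_{p}-\sum^{n}_{k=1}\varrho_{k}y_{k}=\sum^{n}_{k=1}\varrho_{k}s_{pk}.
\]
We may fix the free additive constant by the normalization $\sum_{k}\varrho_{k}y_{k}=0$; this is legitimate because the summed equations are compatible. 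Indeed, $\sum_{p}\varrho_{p}\sum_{k}\varrho_{k}s_{pk}=0$ by skew-symmetry. With this normalization the solution is $\hat{y}_{p}=\frac{1}{|\varrho|}\sum_{k}\varrho_{k}s_{pk}$.

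Finally I would check that this choice satisfies the normalization: $\sum_{p}\varrho_{p}\hat{y}_{p}=\frac{1}{|\varrho|}\sum_{p,k}\varrho_{p}\varrho_{k}s_{pk}=0$, again by skew-symmetry. Exponentiating gives $w_{i}=\exp\hat{y}_{i}=\left(\prod_{k}m^{\varrho_{k}}_{ik}\right)^{1/|\varrho|}$. Then $C=w\oslash w=\exp(\hat{y}\ominus\hat{y})$ is the logarithmic consistent approximation, as claimed.

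The main obstacle is purely bookkeeping: obtaining the factor in the derivative correctly, which requires using both the skew-symmetry of $S$ and the symmetry of the weight matrix $\varrho\otimes\varrho$. I would also stress explicitly that $M$ must be reciprocal for $S$ to be skew-symmetric. If $M$ is merely positive, I would first replace $S$ by its skew part $\frac{1}{2}(S-S^{T})$. This replacement does not change the minimizer, because $y\ominus y$ is skew-symmetric and, in the weighted Frobenius inner product, the symmetric and skew parts are orthogonal. For reciprocal $M$ the skew part is $S$ itself, so the formula reads exactly as (\ref{eq:eq-3-5}).
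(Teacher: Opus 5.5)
Your proof is correct. The paper does not prove this theorem itself. It quotes it from the two cited papers, and inside the paper its additive form reappears as Lemma~\ref{lem:lem-4-1} (also only cited), with Lemma~\ref{lem:lem-6-2} providing the passage between $\mathcal{P}_{\mathcal{A}}$ and $\mathcal{P}_{\mathcal{C}}$. Your normal-equations argument is therefore a self-contained proof where the paper has none.

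Your stationarity conditions are exactly $\langle S-\hat y\ominus\hat y,\,e_p\ominus e_p\rangle_{F(\varrho)}=0$. That is, the residual is orthogonal to the minimal-support basis of Theorem~\ref{th:every-additively-consistent-matrix-4-2}, so you are using the standard orthogonality characterization of best approximation, written in coordinates. Your route also proves Lemma~\ref{lem:lem-4-1} in the form the paper actually uses. Your normalization $\sum_k\varrho_k y_k=0$ is precisely the choice (\ref{eq:eq-4-4}) of $y_n$. Your computation also makes clear that only a common shift of all the $y_i$ is free, not $y_n$ by itself.

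Two small points. Your direct check that $\sum_p\varrho_p\hat y_p=0$ already shows that $\hat y$ is a critical point, so the compatibility remark is superfluous. And $F$ is convex rather than strictly convex, as you note yourself.

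Your closing caveat about reciprocity is not cosmetic. The theorem as printed, for an arbitrary positive $M$, is false in general. Take $n=3$, $\varrho=(1,1,1)^T$, $m_{12}=t>1$ and all other entries equal to $1$, and put $\tau=\ln t$:
\begin{itemize}
\item Formula (\ref{eq:eq-3-5}) gives $\ln w=(\tau/3,0,0)^T$, with $F=\tfrac{7}{9}\tau^{2}$.
\item The skew-part solution $\hat y=(\tau/6,-\tau/6,0)^T$ gives $F=\tfrac{2}{3}\tau^{2}$, which is smaller.
\end{itemize}
In general the minimizer is $w_i=\bigl(\prod_k (m_{ik}/m_{ki})^{\varrho_k/2}\bigr)^{1/|\varrho|}$, and this coincides with (\ref{eq:eq-3-5}) when $M$ is reciprocal. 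So what you have proved is the correct version of the statement, for $M\in\mathcal{M}_n$. This also matches Lemma~\ref{lem:lem-4-1}, which is stated for $S\in\mathcal{S}_n$. You should state explicitly that the hypothesis must be strengthened to reciprocal $M$, rather than leaving this as a side remark.
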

\noindent In real world applications of AHP, the entries of $M=[m_{ij}]$
are often treated as random variables associated to an unknown probability
distribution \citep{Bryson1995agpm}. Thus, the proper choice of weight
vector $\varrho={\left({\varrho}_{1},\dots,{\varrho}_{n}\right)}^{T}$
may be crucial while computing the priority vector $w$ for randomly
disturbed data. A first choice of ${\varrho}_{k}$ can be based on
the orthogonal polynomials theory. More precisely, we suggest taking
${\varrho}_{k}$ dependent on the weights of the inner products which
determine either discrete orthogonal classical polynomials of Chebyshev,
Hahn, Chartier, Mainer, and Kravchuk \citep{Rutka2023teep}, or continuous
orthogonal classical polynomials of Jacobi, Laguerre, Hermite, generalized
Bessel, Jacobi on $(0,+\infty)$, and pseudo$-$Jacobi kind \citep{Rutka2012csot}.
For this purpose, numerical algorithms are used for an extremely efficient
computation of the zeros of these polynomials; see Wang et al. \citep{Wang2014ebwf,Wang2011otcr}.
Another simpler choice of weights will be proposed after presentation
of the Saaty eigenvector principle in Section \ref{sec:SVD-windowing-of}
.

\section{Characterization of additive consistency\label{sec:Characterization-of-additive}}

\noindent The class ${\mathcal{A}}_{n}$ of additively consistent
matrices $A=[a_{ij}]$ is a subspace of $\frac{1}{2}n(n-1)$-dimensional
vector space ${\mathcal{S}}_{n}$ of all skew-symmetric matrices in
${\mathbb{R}}^{n\times n}.$ It is defined by $\frac{1}{6}n(n^{2}+5)$
linearly dependent constraints of the form: 
\begin{align}
a_{ij} & =-a_{ji}\ \ \ \left(1\le i\le j\le n\right),\nonumber \\
a_{ij} & =a_{ik}+a_{kj}\ \ \ (1\le i<k<j\le n).\label{eq:eq-4-1}
\end{align}

\noindent If we introduce the weighted Frobenius norm ${\left\Vert \cdot\right\Vert }_{F(\varrho)}$
induced by the weighted Frobenius inner product $\left\langle \cdot,\cdot\right\rangle _{F(\varrho)}$
in ${\mathbb{R}}^{n\times n},$ then $\mathcal{A}_{n}=\mathcal{P}_{\mathcal{A}}\left(\mathcal{S}_{n}\right)$
is a range of the linear orthogonal projection $\mathcal{P}_{\mathcal{A}}$
of $\mathcal{S}_{n}$ onto $\mathcal{A}_{n}$ defined by 
\begin{equation}
\left\Vert S-\mathcal{P}_{\mathcal{A}}S\right\Vert _{F(\varrho)}=\mathop{\min}_{A\in\mathcal{A}_{n}}\left\Vert S-A\right\Vert _{F(\varrho)}\boldsymbol{,\ \ }S\boldsymbol{\in}\mathcal{S}_{\boldsymbol{n}}.\label{eq:eq-4-2}
\end{equation}

\noindent This projection ${\mathcal{P}}_{\mathcal{A}}$ has been
determined by Koczkodaj et al. in a recent paper \citep{Koczkodaj2020oopo},
where the following lemma is proved.
\begin{lemma}
\noindent\label{lem:lem-4-1}If $S=[s_{ij}]\in\mathcal{S}_{n},$
then there exists a unique matrix $\mathcal{P}_{\mathcal{A}}S$ in
${\mathcal{A}}_{n}$ which satisfies (\ref{eq:eq-4-2}). The elements
of the matrix $\mathcal{P}_{\mathcal{A}}S=[y_{i}-y_{j}]$ are determined
by an additive priority vector $y=(y_{1},\dots,y_{n})^{T}$ such that
$y_{n}$ is an arbitrary real number and 
\begin{equation}
y_{i}=\frac{1}{\left|\varrho\right|}\sum^{n}_{j=1}\varrho_{j}s_{ij}\ \ \left(i=1,\dots,n-1\right),\label{eq:eq-4-3}
\end{equation}
\noindent\begin{flushleft}
where $\left|\varrho\right|=\varrho_{1}+\dots+\varrho_{n}$.
\par\end{flushleft}

\end{lemma}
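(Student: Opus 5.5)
The plan is to treat this as a least-squares problem in the additive priority vector. Existence and uniqueness of $\mathcal{P}_{\mathcal{A}}S$ are the standard facts for orthogonal projection onto the finite-dimensional subspace $\mathcal{A}_{n}$ of the inner product space $(\mathcal{S}_{n},\langle\cdot,\cdot\rangle_{F(\varrho)})$. So the work is to identify the minimizer. By (\ref{eq:add-consistent-matrix-as-tensor-difference}), every $A\in\mathcal{A}_{n}$ has the form $y\ominus y$. Hence (\ref{eq:eq-4-2}) is equivalent to minimizing
\[
f(y)=\sum_{i,j=1}^{n}\varrho_{i}\varrho_{j}\left(s_{ij}-y_{i}+y_{j}\right)^{2},\qquad y\in\mathbb{R}^{n}.
\]
This is a convex quadratic, so any stationary point is a global minimizer. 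Moreover, $f$ is invariant under $y\mapsto y+ce$, so the minimizing $y$ is determined only up to an additive constant. That freedom is what the phrase ``$y_{n}$ arbitrary'' should reflect: the matrix $[y_{i}-y_{j}]$ does not depend on it.

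The key step is the normal equations. Differentiate $f$ with respect to $y_{k}$. The variable $y_{k}$ appears in the terms with $i=k$ and in those with $j=k$. Using $s_{jk}=-s_{kj}$ and swapping the summation index in the $j=k$ terms, both contributions coincide, and $\partial f/\partial y_{k}=0$ becomes
\[
\varrho_{k}\sum_{j=1}^{n}\varrho_{j}\left(s_{kj}-y_{k}+y_{j}\right)=0,
\qquad\text{i.e.}\qquad
\left|\varrho\right|y_{k}-\sum_{j=1}^{n}\varrho_{j}y_{j}=\sum_{j=1}^{n}\varrho_{j}s_{kj},
\]
for $k=1,\dots,n$.

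Next, use the shift freedom to impose the normalization $\sum_{j}\varrho_{j}y_{j}=0$. The system then reads $y_{k}=\frac{1}{|\varrho|}\sum_{j}\varrho_{j}s_{kj}$, which is (\ref{eq:eq-4-3}). I must check that this normalization is compatible with the solution it produces. This holds because
\[
\sum_{k}\varrho_{k}y_{k}=\frac{1}{|\varrho|}\sum_{k,j}\varrho_{k}\varrho_{j}s_{kj}=0
\]
by skew-symmetry of $S$. Hence the vector given by (\ref{eq:eq-4-3}), applied for all indices including $i=n$, is a stationary point, and therefore a minimizer.

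Uniqueness of $\mathcal{P}_{\mathcal{A}}S$ also follows directly from the normal equations. If $y$ and $y'$ both satisfy them, then $z=y-y'$ satisfies $|\varrho|z_{k}=\sum_{j}\varrho_{j}z_{j}$ for every $k$, which forces $z$ to be constant. Thus $y\ominus y=y'\ominus y'$.

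The main obstacle is bookkeeping rather than any conceptual difficulty. The first point is getting the factor and sign right when merging the $i=k$ and $j=k$ terms via skew-symmetry. The second is reconciling the statement's ``$y_{n}$ arbitrary'' with the formula given only for $i\le n-1$. I would handle the latter by noting that the matrix $[y_{i}-y_{j}]$ is unchanged by adding a constant to $y$. Then $\mathcal{P}_{\mathcal{A}}S=[y_{i}-y_{j}]$ is read with (\ref{eq:eq-4-3}) taken for every $i$, or equivalently with the additive normalization fixed by that formula. This is the sense in which the free constant is absorbed.
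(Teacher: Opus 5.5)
The paper does not prove this lemma itself; it quotes it from Koczkodaj et al.\ \citep{Koczkodaj2020oopo}. So there is no in-text argument to compare with, and your normal-equations derivation serves as a self-contained proof. It is correct. Three steps carry it, and all are sound:
\begin{itemize}
\item the stationarity condition $\varrho_k\sum_j\varrho_j(s_{kj}-y_k+y_j)=0$, which is equivalent to the orthogonality conditions $\langle S-y\ominus y,\,e_k\ominus e_k\rangle_{F(\varrho)}=0$;
\item the consistency of the normalization $\sum_k\varrho_k y_k=0$, obtained from skew-symmetry;
\item the argument that any two solutions differ by a multiple of $e$.
\end{itemize}
Together they give (\ref{eq:eq-4-3}) for every $i=1,\dots,n$.

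You should be more explicit about the clause ``$y_n$ is an arbitrary real number''. Your reconciliation conflates two different things. Invariance under the common shift $y\mapsto y+ce$ does not let you keep $y_1,\dots,y_{n-1}$ fixed by (\ref{eq:eq-4-3}) while changing $y_n$ alone, because that changes the entries $y_i-y_n$.

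Taken literally, the clause is false. Take $n=2$, $\varrho=(1,1)^T$ and $s_{12}=s\neq0$. Then $\mathcal{A}_2=\mathcal{S}_2$, so $\mathcal{P}_{\mathcal{A}}S=S$ forces $y_1-y_2=s$. Since (\ref{eq:eq-4-3}) gives $y_1=s/2$, only $y_2=-s/2$ works.

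What you have actually proved is the corrected statement, which the paper uses right afterwards via (\ref{eq:eq-4-4}): $y_n$ must also be the weighted mean of the $n$-th row. Equivalently, $y_i=y_n+\frac{1}{|\varrho|}\sum_j\varrho_j(s_{ij}-s_{nj})$ with $y_n$ free. State this correction outright instead of saying the free constant is ``absorbed''.
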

\noindent This lemma yields a useful tensor representation of ${\mathcal{P}}_{\mathcal{A}}.$
Such a representation will now be derived, under the assumption that
the additive priority vector $y={(y_{1},\dots,y_{n})}^{T}$ is fixed,
by taking $y_{n}$ equal to the generalized arithmetic mean of the
n-th row of $S:$
\begin{equation}
y_{n}=\frac{1}{\left|\varrho\right|}\sum^{n}_{j=1}\varrho_{j}s_{nj}.\label{eq:eq-4-4}
\end{equation}

\noindent This choice of $y_{n}$ guarantees the uniqueness of the
representation ${\mathcal{P}}_{\mathcal{A}}S=[y_{i}-y_{j}]$ of the
matrix ${\mathcal{P}}_{\mathcal{A}}S$ defined by relations (\ref{eq:eq-4-2}).
It is clear that ${\mathcal{P}}_{\mathcal{A}}S=S$ if and only if
$S=A$ for some $A\in{\mathcal{A}}_{n}.$ Thus, Lemma \ref{lem:lem-4-1}
together with (\ref{eq:eq-4-4}) shows that the matrix $A=\left[a_{ij}\right]$
belongs to ${\mathcal{A}}_{n}$ if and only if 
\begin{equation}
a_{ij}=y_{i}-y_{j}=\frac{1}{\left|\varrho\right|}\sum^{n}_{k=1}\varrho_{k}(a_{ik}-a_{jk})\ \ (i,j=1,\dots,n).\label{eq:eq-4-5}
\end{equation}

\noindent\begin{flushleft}
These conditions can be rewritten in the form $A=B\ominus B^{T},$
where
\begin{equation}
B\coloneqq\left[\begin{array}{ccc}
y_{1} & \cdots & y_{1}\\
\vdots & \ddots & \vdots\\
y_{n} & \cdots & y_{n}
\end{array}\right]=\sum^{n}_{k=1}y_{k}e_{k}\otimes e\label{eq:eq-4-6}
\end{equation}
and
\begin{equation}
B^{T}=\left[\begin{array}{ccc}
y_{1} & \cdots & y_{n}\\
\vdots & \ddots & \vdots\\
y_{1} & \cdots & y_{n}
\end{array}\right]=\sum^{n}_{k=1}y_{k}e\otimes e_{k}.\label{eq:eq-4-7}
\end{equation}
\par\end{flushleft}

\noindent\begin{flushleft}
Here $e_{1},\dots,e_{n}$ is the standard basis of ${\mathbb{R}}^{n}$
and 
\[
e=e_{1}+\dots+e_{n}=\left(1,\dots,1\right)^{T}.
\]
\par\end{flushleft}
\begin{theorem}
\noindent\label{th:every-additively-consistent-matrix-4-2}Every
additively consistent matrix $A=[y_{i}-y_{j}]$ in ${\mathcal{A}}_{n}$
can be represented as a linear combination 
\begin{equation}
A=\sum^{n-1}_{k=1}\left(y_{k}-y_{n}\right)A_{k}\label{eq:eq-4-8}
\end{equation}

of additively consistent basis matrices
\begin{equation}
A_{k}=e_{k}\otimes e-e\otimes e_{k}{=e}_{k}\ominus e_{k}\ \ \ \left(k=1,\dots,n-1\right)\label{eq:eq-4-9}
\end{equation}

where $y_{1},\dots,y_{n}$ are generalized arithmetic means of rows
of $A.$ Additionally, the set $\{A_{1},\ldots,A_{n-1}\}$ is a basis
of $\mathcal{A}_{n}$.
\end{theorem}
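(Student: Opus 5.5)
The plan is to verify the expansion (\ref{eq:eq-4-8}) entrywise, and then deduce the basis property from it by a dimension-free argument.

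First I compute the entries of the basis matrices. By (\ref{eq:tensor-product-def}), the matrix $e_{k}\otimes e$ has ones in row $k$ and zeros elsewhere, while $e\otimes e_{k}$ has ones in column $k$ and zeros elsewhere. Hence
\[
(A_{k})_{ij}=\delta_{ik}-\delta_{jk},
\]
which is exactly $e_{k}\ominus e_{k}$. In particular each $A_{k}$ is additively consistent with additive priority vector $e_{k}$.

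Next, let $A=[y_{i}-y_{j}]\in\mathcal{A}_{n}$, and fix the additive priority vector $y$ as in (\ref{eq:eq-4-5}) by the generalized arithmetic means of the rows. Write $z_{k}=y_{k}-y_{n}$ for $k=1,\dots,n$, so that $z_{n}=0$. The $(i,j)$ entry of the right-hand side of (\ref{eq:eq-4-8}) is
\[
\sum_{k=1}^{n-1}z_{k}(\delta_{ik}-\delta_{jk})=z_{i}-z_{j}.
\]
This step uses $z_{n}=0$ to absorb the cases $i=n$ or $j=n$. Since $z_{i}-z_{j}=y_{i}-y_{j}=a_{ij}$, the identity (\ref{eq:eq-4-8}) holds. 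The particular normalization of $y$ is irrelevant here, because only differences enter; this is worth remarking explicitly. Equivalently, in tensor form,
\[
A=B-B^{T}=\sum_{k}y_{k}(e_{k}\otimes e-e\otimes e_{k}),
\]
and the $k=n$ term is eliminated using $\sum_{k=1}^{n}A_{k}=e\otimes e-e\otimes e=0$, which gives $A_{n}=-\sum_{k<n}A_{k}$.

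For the basis claim, spanning is exactly (\ref{eq:eq-4-8}). For linear independence, suppose $\sum_{k=1}^{n-1}c_{k}A_{k}=0$. Reading off the $(k,n)$ entry for $k\le n-1$ gives $c_{k}(\delta_{kk}-\delta_{nk})=c_{k}=0$, so all coefficients vanish. It follows that $\dim\mathcal{A}_{n}=n-1$.

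I expect no serious obstacle. The only delicate point is the boundary bookkeeping, namely the indices $i=n$ or $j=n$ and the redundancy of $A_{n}$, which the normalization $z_{n}=0$ handles cleanly.
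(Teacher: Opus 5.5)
Your proof is correct and follows the paper's own route: the expansion comes from $A=B-B^{T}=\sum_{k=1}^{n}y_{k}(e_{k}\otimes e-e\otimes e_{k})$ together with $\sum_{k=1}^{n}(e_{k}\otimes e-e\otimes e_{k})=0$ (your normalization $z_{n}=0$ is the same device), and your $(k,n)$-entry argument for independence is the paper's observation that $\alpha\ominus\alpha=0$ with $\alpha=(\alpha_{1},\dots,\alpha_{n-1},0)^{T}$ forces $\alpha=0$. The only difference is that you get additive consistency of $A_{k}$ directly from $A_{k}=e_{k}\ominus e_{k}$ and the tensor-difference characterization (\ref{eq:add-consistent-matrix-as-tensor-difference}), whereas the paper checks $\mathcal{P}_{\mathcal{A}}A_{k}=A_{k}$ via Lemma~\ref{lem:lem-4-1}; your route is shorter and avoids applying that lemma to the non-skew-symmetric pieces $e_{k}\otimes e$ and $e\otimes e_{k}$.
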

\begin{proof}
In view of (\ref{eq:eq-4-4}) - (\ref{eq:eq-4-7}) we directly obtain
\[
A=B-B^{T}=\sum^{n}_{k=1}y_{k}\left(e_{k}\otimes e-e\otimes e_{k}\right).
\]

Hence expansion (\ref{eq:eq-4-8}) is a consequence of the following
identity
\[
\sum^{n}_{k=1}\left(e_{k}\otimes e-e\otimes e_{k}\right)=\left(\sum^{n}_{k=1}e_{k}\right)\otimes e-e\otimes\left(\sum^{n}_{k=1}e_{k}\right)=
\]
\[
=e\otimes e-e\otimes e=0.
\]

Now we prove the additive consistency of the matrices $A_{k}=e_{k}\otimes e-e\otimes e_{k}.$
For this purpose, we check the equalities ${\mathcal{P}}_{\mathcal{A}}A_{k}=A_{k}.$
By applying twice Lemma \ref{lem:lem-4-1} we obtain 
\[
\mathcal{P}_{\mathcal{A}}\left(e_{k}\otimes e\right)=\left[z_{i}-z_{j}\right]\ \ \ \ (z_{k}=1,z_{i}=0\ \mathrm{for}\ i\neq k)
\]

and 
\[
\mathcal{P}_{\mathcal{A}}\left(e\otimes e_{k}\right)=\left[z_{i}-z_{j}\right]\ \ \ \left(z_{i}=\frac{\varrho_{i}}{\left|\varrho\right|}\,\text{for every}\,i\right).
\]
Hence, we obtain

\noindent
\[
\mathcal{P}_{\mathcal{A}}A_{k}=\mathcal{P}_{\mathcal{A}}\left(e_{k}\otimes e\right)-\mathcal{P}_{\mathcal{A}}\left(e\otimes e_{k}\right)=e_{k}\otimes e-e\otimes e_{k}=A_{k}.
\]

\noindent Finally, the matrices $A_{1},\dots,A_{n-1}$ are linearly
independent. Indeed, if

\noindent
\[
{\alpha}_{1}A_{1}+\dots+{\alpha}_{n-1}A_{n-1}=0,
\]

\noindent then we have

\noindent
\[
\sum^{n-1}_{k=1}\alpha_{k}\mathcal{P}_{\mathcal{A}}A_{k}=\sum^{n-1}_{k=1}\alpha_{k}\left(e_{k}\otimes e-e\otimes e_{k}\right)=\alpha\otimes e-e\otimes\alpha=0,
\]

\noindent where $\alpha={({\alpha}_{1},\dots,{\alpha}_{n-1},0)}^{T}.\ $
Since it is possible only if $\alpha=0,$ the proof is completed.
\end{proof}
\begin{remark}
\noindent\label{rem:remark-4-3}By (\ref{eq:eq-4-9}) it follows
directly that entries of the basis matrices $A_{k}=[a^{k}_{ij}]$
of ${\mathcal{A}}_{n}$ are equal to

\[
a^{k}_{ij}=\begin{cases}
1 & \text{if}\,i=k\,\text{and}\,j\neq k,\\
-1 & \text{if}\,i\neq k\,\text{and}\,j=k,\\
0 & \text{otherwise}.
\end{cases}
\]

\noindent Hence, each $\text{supp}(A_{k})$ has the constant cardinality
$2n-2.$
\end{remark}
Let us consider the following examples. 
\begin{example}
\noindent Suppose that $n=3.$ Then the basis $\{A_{1},A_{2}\}$ of
${\mathcal{A}}_{\boldsymbol{3}}$\textbf{ }is

\noindent
\[
A_{1}=\left[\begin{array}{ccc}
0 & 1 & 1\\
-1 & 0 & 0\\
-1 & 0 & 0
\end{array}\right],\ \ A_{2}=\left[\begin{array}{ccc}
0 & -1 & 0\\
1 & 0 & 1\\
0 & -1 & 0
\end{array}\right].
\]

\noindent If we suppose in addition that the space ${\mathbb{R}}^{3\times3}$
is equipped with the Frobenius inner product, then the Gram-Schmidt
process can be applied to the matrices $A_{1},A_{2}$ and an orthogonal
basis can be computed for ${\mathcal{A}}_{3}:$

\noindent
\[
B_{1}=\left[\begin{array}{ccc}
0 & 1 & 1\\
-1 & 0 & 0\\
-1 & 0 & 0
\end{array}\right],\ \ B_{2}=\left[\begin{array}{ccc}
0 & -{{\frac{1}{2}}} & {{\frac{1}{2}}}\\
{{\frac{1}{2}}} & 0 & 1\\
-{{\frac{1}{2}}} & -1 & 0
\end{array}\right].
\]

\noindent In this basis, the projection $P_{\mathcal{A}}$ is given
by the formula

\noindent
\[
P_{\mathcal{A}}S=\frac{\left\langle S,B_{1}\right\rangle }{\left\langle B_{1},B_{1}\right\rangle }\ B_{1}+\frac{\left\langle S,B_{2}\right\rangle }{\left\langle B_{2},B_{2}\right\rangle }\ B_{2}\ \ \ \left(S\in\mathcal{S}_{3}\right),
\]

\noindent in which coefficients are equal to: 
\[
\frac{\left\langle S,B_{1}\right\rangle }{\left\langle B_{1},B_{1}\right\rangle }=\frac{S_{12}+S_{13}}{2}\,\,\text{and}\,\,\frac{\left\langle S,B_{2}\right\rangle }{\left\langle B_{2},B_{2}\right\rangle }=\frac{-S_{12}+S_{13}+2S_{23}}{3},
\]

\noindent where $s_{ij}$ are elements of the matrix $S.$
\end{example}
\begin{example}
\noindent If $n=4$ then the basis $\{A_{1},A_{2},A_{3}\}$ of $\mathcal{A}_{4}$
is 
\[
A_{1}=\left[\begin{array}{cccc}
0 & 1 & 1 & 1\\
-1 & 0 & 0 & 0\\
-1 & 0 & 0 & 0\\
-1 & 0 & 0 & 0
\end{array}\right],\,\,\,A_{2}=\left[\begin{array}{cccc}
0 & -1 & 0 & 0\\
1 & 0 & 1 & 1\\
0 & -1 & 0 & 0\\
0 & -1 & 0 & 0
\end{array}\right],\,\,\,A_{3}=\left[\begin{array}{cccc}
0 & 0 & -1 & 0\\
0 & 0 & -1 & 0\\
1 & 1 & 0 & 1\\
0 & 0 & -1 & 0
\end{array}\right],
\]

and after orthogonalization
\[
B_{1}=\left[\begin{array}{cccc}
0 & 1 & 1 & 1\\
-1 & 0 & 0 & 0\\
-1 & 0 & 0 & 0\\
-1 & 0 & 0 & 0
\end{array}\right],\,\,\,B_{2}=\left[\begin{array}{cccc}
0 & -\frac{2}{3} & \frac{1}{3} & \frac{1}{3}\\
\frac{2}{3} & 0 & 1 & 1\\
-\frac{1}{3} & -1 & 0 & 0\\
-\frac{1}{3} & -1 & 0 & 0
\end{array}\right],\,\,\,B_{3}=\left[\begin{array}{cccc}
0 & 0 & -\frac{1}{2} & \frac{1}{2}\\
0 & 0 & -\frac{1}{2} & \frac{1}{2}\\
\frac{1}{2} & \frac{1}{2} & 0 & 1\\
-\frac{1}{2} & -\frac{1}{2} & -1 & 0
\end{array}\right].
\]

\noindent For comparison, the basis $\{\widehat{A}_{1},\widehat{A}_{2},\widehat{A}_{3}\}$
of ${\mathcal{A}}_{\boldsymbol{n}}$, discovered by Koczkodaj and
Orłowski \citep{Koczkodaj1997aobf}, is defined as follows: $\widehat{A}_{1}=A_{1}$,
\[
\widehat{A}_{2}=\left[\begin{array}{cccc}
0 & 0 & 1 & 1\\
0 & 0 & 1 & 1\\
-1 & -1 & 0 & 0\\
-1 & -1 & 0 & 0
\end{array}\right],\,\,\,\widehat{A}_{3}=\left[\begin{array}{cccc}
0 & 0 & 0 & 1\\
0 & 0 & 0 & 1\\
0 & 0 & 0 & 1\\
-1 & -1 & -1 & 0
\end{array}\right].
\]
 Since the matrix $\widehat{A}_{2}$ has $8$ nonzero entries, it
follows that it does not have the minimal support of cardinality $6$. 
\end{example}

\section{Orthogonal basis of $\mathcal{A}_{\boldsymbol{n}}$\label{sec:Orthogonal-basis-of}}

\noindent Our next results are concerned with construction of an orthogonal
basis $\{B_{1},\ldots,B_{n-1}\}$ of additively consistent subspace
$\mathcal{A}_{n}=\text{span}\{A_{1},\ldots,A_{n-1}\}$ of ${\mathbb{R}}^{n\times n}$
for any $n>2.$ For simplicity, we confine our attention to the most
important case of the standard Frobenius inner product of the matrices
$A=[a_{ij}]$ and $B=[b_{ij}]$ in ${\mathbb{R}}^{n\times n}$ defined
by 
\[
\left\langle A,B\right\rangle =\sum^{n}_{i,j=1}a_{ij}b_{ij}.
\]
 In other words, throughout this section we assume that $\{B_{1},\ldots,B_{n-1}\}$
is a basis of ${\mathcal{A}}_{\boldsymbol{n}}$ defined by the Gram-Schmidt
recurrent formulae: 
\[
B_{1}=A_{1},
\]

\[
B_{k}=A_{k}-\sum^{k-1}_{i=1}\frac{\left\langle A_{k},B_{i}\right\rangle }{\left\langle B_{i},B_{i}\right\rangle }B_{i}\,\,\text{for}\,\,k=2,\ldots,n-1,
\]

where the matrices $A_{k}$ are defined as in Theorem \ref{th:every-additively-consistent-matrix-4-2}:
\[
A_{k}=e_{k}\otimes e-e\otimes e_{k}=e_{k}\ominus e_{k}\,\,\,k=1,\ldots n-1.
\]

\begin{lemma}
\label{lem:lemma-5-1}It holds that 
\[
\left\langle A_{k},A_{j}\right\rangle =\begin{cases}
2 & \text{if}\,\,k=j\\
2(n-1) & \text{if}\,\,k\neq j
\end{cases}.
\]
\end{lemma}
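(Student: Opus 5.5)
The plan is to compute $\left\langle A_{k},A_{j}\right\rangle$ directly from the tensor form (\ref{eq:eq-4-9}), $A_{k}=e_{k}\otimes e-e\otimes e_{k}$, and to cross-check the result against the explicit entries in Remark \ref{rem:remark-4-3}. I expect the computation to return the values of the lemma with the two cases interchanged and a sign on the off-diagonal case. I will therefore carry it out and then say exactly which form of the statement it establishes.

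\textbf{Step 1: expand the inner product.} I would use the rule
\[
\left\langle x\otimes y,u\otimes v\right\rangle =\mathrm{tr}\left(yx^{T}uv^{T}\right)=\left\langle x,u\right\rangle \left\langle y,v\right\rangle ,
\]
together with $\left\langle e_{k},e_{j}\right\rangle =\delta_{kj}$, $\left\langle e_{k},e\right\rangle =1$ and $\left\langle e,e\right\rangle =n$. Expanding bilinearly gives
\[
\left\langle A_{k},A_{j}\right\rangle =\left\langle e_{k},e_{j}\right\rangle \left\langle e,e\right\rangle -\left\langle e_{k},e\right\rangle \left\langle e,e_{j}\right\rangle -\left\langle e,e_{j}\right\rangle \left\langle e_{k},e\right\rangle +\left\langle e,e\right\rangle \left\langle e_{k},e_{j}\right\rangle =2n\delta_{kj}-2 .
\]

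\textbf{Step 2: split into the two cases of the lemma.}
\begin{itemize}
\item For $k=j$, the inner product is $\left\Vert A_{k}\right\Vert _{F}^{2}$. By Remark \ref{rem:remark-4-3} it is the number of nonzero entries of $A_{k}$, each equal to $\pm1$, namely $2(n-1)$. This agrees with $2n-2$ from Step 1.
\item For $k\neq j$, Remark \ref{rem:remark-4-3} shows that the supports of $A_{k}$ and $A_{j}$ overlap only at the two positions $(k,j)$ and $(j,k)$. At $(k,j)$ we have $a^{k}_{kj}=1$ and $a^{j}_{kj}=-1$, and symmetrically at $(j,k)$. Each overlapping position therefore contributes $-1$, so the inner product is $-2$. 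This also agrees with Step 1.
\end{itemize}

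\textbf{Main obstacle: reconciling the result with the stated values.} The numbers $2$ and $2(n-1)$ printed in the lemma do occur in this computation, but not in the stated roles. They are $\left|\text{supp}A_{k}\cap\text{supp}A_{j}\right|$ for $k\neq j$ and $\left|\text{supp}A_{k}\right|$ for $k=j$ respectively, which are the counting quantities behind Step 2. The actual signed Frobenius inner products are $2(n-1)$ for $k=j$ and $-2$ for $k\neq j$.

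I do not see any reading of the weighted or unweighted Frobenius product under which the stated assignment holds literally. A concrete check: for $n=3$ the paper's own example gives $\left\langle A_{1},A_{1}\right\rangle =4\neq2$. So the proof I would write establishes the corrected form $\left\langle A_{k},A_{j}\right\rangle =2n\delta_{kj}-2$, which is exactly what the subsequent Gram--Schmidt coefficients need.
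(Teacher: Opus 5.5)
Your computation is correct, and your Step 2 is the paper's own proof, which reads the values off the explicit entries in Remark~\ref{rem:remark-4-3}; the tensor expansion giving $2n\delta_{kj}-2$ is just an equivalent extra check. Your diagnosis of the printed statement is also right: it is a typo, and the intended values are $2(n-1)$ for $k=j$ and $-2$ for $k\neq j$, which are exactly the values the paper uses in the proof of Lemma~\ref{lem:lem-5-2} (in $\langle A_k,B_j\rangle=-2-\frac{2(j-1)}{n-j+1}$ and in $\langle B_1,B_1\rangle=2(n-1)$).
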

\begin{proof}
The formula is a direct consequence of Remark \ref{rem:remark-4-3}. 
\end{proof}
\begin{lemma}
\label{lem:lem-5-2}The matrices $A_{k}$ and $B_{k}$ satisfy the
formulae: 
\[
B_{k}=A_{k}+\frac{1}{n-k+1}\ \left(A_{1}+\dots+A_{k-1}\right)
\]
\[
=A_{k}+\frac{1}{n-1}\ B_{1}+\dots+\frac{1}{n-k+1}\ B_{k-1},
\]
\[
\left\langle A_{k},B_{j}\right\rangle =-\frac{2n}{n-j+1}\ \ \ \left(j=1,\dots,k-1\right),
\]
\[
\left\langle B_{k},B_{k}\right\rangle =\frac{2n(n-k)}{n-k+1}
\]

for every $k=1,\dots,n-1.$
\end{lemma}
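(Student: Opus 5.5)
The plan is to guess the closed form for $B_k$, check that it has the two properties that characterize the Gram--Schmidt vector, and then read off the inner products by direct computation. First I will fix the Gram data. Counting entries via Remark \ref{rem:remark-4-3}, $A_k$ has $n-1$ entries equal to $1$ (row $k$) and $n-1$ entries equal to $-1$ (column $k$). Hence $\langle A_k,A_k\rangle=2(n-1)$. For $k\neq j$ the supports of $A_k$ and $A_j$ meet only at the positions $(k,j)$ and $(j,k)$, and both contribute $-1$, so $\langle A_k,A_j\rangle=-2$. The values printed in Lemma \ref{lem:lemma-5-1} are therefore interchanged and missing a sign. I will work with the corrected values, which are what the claimed formulas require: for instance, $\langle B_1,B_1\rangle=2(n-1)$ is what $\frac{2n(n-k)}{n-k+1}$ gives at $k=1$.

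Next, define the candidate
\[
C_k=A_k+\frac{1}{n-k+1}\left(A_1+\dots+A_{k-1}\right).
\]
For $j<k$, the sum $A_1+\dots+A_{k-1}$ contains $A_j$ once and $k-2$ other terms. This gives
\[
\langle C_k,A_j\rangle=-2+\frac{1}{n-k+1}\bigl(2(n-1)-2(k-2)\bigr)=-2+\frac{2(n-k+1)}{n-k+1}=0 .
\]
So $C_k$ lies in $\textrm{span}\{A_1,\dots,A_k\}$, is orthogonal to $\textrm{span}\{A_1,\dots,A_{k-1}\}=\textrm{span}\{B_1,\dots,B_{k-1}\}$, and has coefficient $1$ on $A_k$. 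These three properties characterize the Gram--Schmidt vector uniquely, because $A_k-B_k$ and $A_k-C_k$ both lie in $\textrm{span}\{B_1,\dots,B_{k-1}\}$. Hence $B_k=C_k$, which is the first formula. No induction is needed beyond invoking the span equality of the Gram--Schmidt process.

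Then I compute the inner products directly. Using the first formula with index $j<k$, and noting that $k\notin\{1,\dots,j-1\}$,
\[
\langle A_k,B_j\rangle=-2-\frac{2(j-1)}{n-j+1}=-\frac{2n}{n-j+1}.
\]
Since $B_k\perp A_i$ for $i<k$, we get $\langle B_k,B_k\rangle=\langle B_k,A_k\rangle$. Expanding,
\[
\langle B_k,A_k\rangle=2(n-1)-\frac{2(k-1)}{n-k+1}=\frac{2n(n-k)}{n-k+1}.
\]

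Finally, the second expression for $B_k$ follows from the Gram--Schmidt recurrence. The coefficient of $B_j$ in that recurrence is $-\langle A_k,B_j\rangle/\langle B_j,B_j\rangle=\frac{2n}{n-j+1}\cdot\frac{n-j+1}{2n(n-j)}=\frac{1}{n-j}$. For $j=1,\dots,k-1$ this runs through $\frac{1}{n-1},\dots,\frac{1}{n-k+1}$, as stated.

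The main obstacle is not conceptual. It is guessing the coefficient $\frac{1}{n-k+1}$ (found from the orthogonality equation above) and noticing that Lemma \ref{lem:lemma-5-1} must be corrected before use. Everything else is bookkeeping of supports.
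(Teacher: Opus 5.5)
Your proof is correct, and so is your correction of Lemma~\ref{lem:lemma-5-1}. The true values are $\langle A_k,A_k\rangle=2(n-1)$ and $\langle A_k,A_j\rangle=-2$ for $k\neq j$, and these are exactly the values the paper's proof silently uses.

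Your route differs from the paper's in its logical order. The paper inducts on $k$ through the Gram--Schmidt recurrence:
\begin{enumerate}
\item From the closed forms of $B_1,\dots,B_{k-1}$ it computes $\langle A_k,B_j\rangle$ and $\langle B_j,B_j\rangle$.
\item This gives the recurrence coefficients $\frac{1}{n-j}$, which is the second formula.
\item It then collapses the sum via $B_{k-1}-A_{k-1}=\frac{1}{n-k+2}(A_1+\dots+A_{k-2})$ to reach the closed form of $B_k$.
\end{enumerate}
You establish the closed form first, by one orthogonality check plus uniqueness of the Gram--Schmidt vector. Only then do you derive the inner products and the recurrence.

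The paper's induction has one advantage: it produces the coefficient $\frac{1}{n-k+1}$ rather than postulating it. Your ansatz is in fact forced, though. The Gram matrix of $A_1,\dots,A_{k-1}$ and the inner products $\langle A_k,A_j\rangle$ are invariant under permuting $1,\dots,k-1$. So the unique solution of the orthogonality system must have equal coefficients.

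Your version is shorter and logically cleaner, for three reasons:
\begin{enumerate}
\item The identity $\langle B_k,B_k\rangle=\langle B_k,A_k\rangle$ spares you expanding a quadratic form.
\item You avoid the paper's misordering, in which $\langle B_k,B_k\rangle$ is computed from the expression for $B_k$ before the inductive step has established it.
\item You cover $k=n-1$, which the paper's ``$1<k<n-1$'' formally leaves out.
\end{enumerate}
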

\begin{proof}
If $k=1,$ then $B_{1}=A_{1}$ and $\left\langle B_{1},B_{1}\right\rangle =\left\langle A_{1},A_{1}\right\rangle .$
It follows from Lemma \ref{lem:lemma-5-1} that $\left\langle B_{1},B_{1}\right\rangle =2\left(n-1\right),$
which finishes the proof. On the other hand, if $1<k<n-1,$ then we
suppose that the lemma is true for $1,\dots,k-1$ and use Lemma \ref{lem:lemma-5-1}
to obtain 
\[
\left\langle A_{k},B_{j}\right\rangle =\left\langle A_{k},A_{j}+\frac{1}{n-j+1}\ \left(A_{1}+\dots+A_{j-1}\right)\right\rangle 
\]
\[
=-2+\frac{-2\left(j-1\right)}{n-j+1}=\frac{-2n}{n-j+1}
\]

and 
\[
\left\langle B_{k},B_{k}\right\rangle =\left\langle A_{k},A_{k}\right\rangle +\frac{2}{n-k+1}\left\langle A_{k},A_{1}+\dots+A_{k-1}\right\rangle 
\]
\[
+\frac{1}{{(n-k+1)}^{2}}\left\langle A_{1}+\dots+A_{k-1},A_{1}+\dots+A_{k-1}\right\rangle 
\]
\[
=2\left(n-1\right)-\frac{4\left(k-1\right)}{n-k+1}+\frac{2\left(k-1\right)\left(n-1\right)-2(k-1)(k-2)}{{(n-k+1)}^{2}}
\]
 {}
\[
=\frac{2n(n-k)}{n-k+1}.
\]

\noindent Hence, it follows from the Gram-Schmidt formulae that

\noindent
\[
B_{k}=A_{k}-\frac{\left\langle A_{k},B_{1}\right\rangle }{\left\langle B_{1},B_{1}\right\rangle }\ B_{1}-\dots-\frac{\left\langle A_{k},B_{k-1}\right\rangle }{\left\langle B_{k-1},B_{k-1}\right\rangle }\ B_{k-1}
\]

\[
=A_{k}+\frac{1}{n-1}\ B_{1}+\dots+\frac{1}{n-k+1}\ B_{k-1}.\ \ \ 
\]
These formulae allow us to finish the inductive step of the proof:

\noindent
\[
B_{k}=A_{k}+\frac{1}{n-k+2}\ \left(A_{1}+\dots+A_{k-2}\right)
\]

\[
+\frac{1}{n-k+2}\ \left(A_{k-1}+{\frac{1}{n-k+2}(A}_{1}+\dots+A_{k-2})\right)
\]

\[
=A_{k}+\frac{1}{n-k+1}\ \left(A_{1}+\dots+A_{k-1}\right).
\]

\noindent This completes the proof.
\end{proof}
\noindent The last lemma is a crucial step to derive closed formulae
for the orthogonal basis $\mathrm{\{}$$B_{1},\dots,B_{n-1}\}$ of
$\mathcal{A}_{n}.$ Indeed, let $L:{\mathbb{R}}^{n}\longrightarrow{\mathbb{R}}^{n\times n}$
be a linear mapping defined by 
\[
L(x)=x\otimes e-e\otimes x=x\ominus x,\,\,\,x=\left(x_{1},\dots,x_{n}\right)^{T}\in\mathbb{R}^{n}.
\]
Then we can write the basis $A_{k}=e_{k}\otimes e-e\otimes e_{k}$
of ${\mathcal{A}}_{n}$ in the form

\[
A_{k}=L(e_{k})\,\,\,\text{for}\,\,\,\left(k=1,\dots,n-1\right).
\]

\begin{theorem}
\noindent\label{thm:thm-5-3}Let $x^{k}={\left(x^{k}_{1},\dots,x^{k}_{n}\right)}^{T}$
be vectors in ${\mathbb{R}}^{n}$ such that 
\[
x^{k}_{1}=\dots=x^{k}_{k-1}=\frac{1}{n-k+1},\ \ x^{k}_{k}=1,\ \ x^{k}_{k+1}=\dots=x^{k}_{n}=0.
\]
Then 
\[
B_{k}=x^{k}\otimes e-e\otimes x^{k}=x^{k}\ominus\ x^{k}\ \left(k=1,\dots,n-1\right).
\]
\end{theorem}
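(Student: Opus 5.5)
The plan is to read off Theorem \ref{thm:thm-5-3} from the first identity of Lemma \ref{lem:lem-5-2}, using the linearity of the map $L(x)=x\otimes e-e\otimes x$. Lemma \ref{lem:lem-5-2} gives, for every $k=1,\dots,n-1$,
\[
B_{k}=A_{k}+\frac{1}{n-k+1}\left(A_{1}+\dots+A_{k-1}\right),
\]
and Theorem \ref{th:every-additively-consistent-matrix-4-2} gives $A_{j}=L(e_{j})$. Since both $\otimes$ and the difference of tensor products are bilinear, $L$ is linear. Hence
\[
B_{k}=L(e_{k})+\frac{1}{n-k+1}\sum_{j=1}^{k-1}L(e_{j})=L\Bigl(e_{k}+\frac{1}{n-k+1}\sum_{j=1}^{k-1}e_{j}\Bigr).
\]

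The vector inside $L$ has components $\frac{1}{n-k+1}$ in positions $1,\dots,k-1$, the value $1$ in position $k$, and $0$ in positions $k+1,\dots,n$. This is exactly $x^{k}$, so $B_{k}=x^{k}\otimes e-e\otimes x^{k}=x^{k}\ominus x^{k}$. The case $k=1$ is included, since the sum is empty and $x^{1}=e_{1}$, giving $B_{1}=A_{1}$.

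The only real obstacle is that the whole argument rests on Lemma \ref{lem:lem-5-2}, so I would check that its induction is sound. Its key ingredient is the recursive identity
\[
B_{k}=A_{k}+\sum_{j<k}\frac{1}{n-j+1}B_{j},
\]
which follows from $\langle A_{k},B_{j}\rangle=-\frac{2n}{n-j+1}$ and $\langle B_{j},B_{j}\rangle=\frac{2n(n-j)}{n-j+1}$. I would re-verify the collapse of this identity into the closed form by an independent argument. Substitute $B_{k}=L(x^{k})$ and show that the $x^{k}$ satisfy the vector recursion
\[
x^{k}=e_{k}+\sum_{j<k}\frac{1}{n-j}\,x^{j}.
\]
This reduces to checking coordinates $i<k$, where it becomes the telescoping identity
\[
\frac{1}{n-i}+\sum_{j=i+1}^{k-1}\frac{1}{(n-j)(n-j+1)}=\frac{1}{n-k+1},
\]
which holds because each summand equals $\frac{1}{n-j}-\frac{1}{n-j+1}$.

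As an alternative, more self-contained check, I would verify orthogonality directly. For $x,y\in\mathbb{R}^{n}$ one computes
\[
\langle L(x),L(y)\rangle=2n\langle x,y\rangle-2\langle x,e\rangle\langle y,e\rangle .
\]
For $j<k$ we have $\langle x^{j},x^{k}\rangle=\frac{(j-1)}{(n-j+1)(n-k+1)}+\frac{1}{n-k+1}=\frac{n}{(n-j+1)(n-k+1)}$. Also $\langle x^{k},e\rangle=\frac{n}{n-k+1}$. Substituting, $\langle L(x^{j}),L(x^{k})\rangle=0$. In addition, $\text{span}\{x^{1},\dots,x^{k}\}=\text{span}\{e_{1},\dots,e_{k}\}$ because the matrix of coefficients is triangular, and the coefficient of $e_{k}$ in $x^{k}$ is $1$, matching the Gram--Schmidt normalization. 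By uniqueness of the Gram--Schmidt output, this confirms $B_{k}=L(x^{k})$.
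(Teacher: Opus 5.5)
Your main argument is the paper's proof: you take the closed form $B_k=A_k+\frac{1}{n-k+1}(A_1+\dots+A_{k-1})$ from Lemma~\ref{lem:lem-5-2}, write $A_j=L(e_j)$, and use the linearity of $L$ to recognize $x^k$. Your direct check via $\langle L(x),L(y)\rangle=2n\langle x,y\rangle-2\langle x,e\rangle\langle y,e\rangle$, together with the Gram--Schmidt uniqueness argument, is also correct, and it is a useful self-contained proof that bypasses Lemma~\ref{lem:lem-5-2} and Lemma~\ref{lem:lemma-5-1}. The statement of Lemma~\ref{lem:lemma-5-1} is misprinted: the true values, which the proof of Lemma~\ref{lem:lem-5-2} actually uses, are $\langle A_k,A_k\rangle=2(n-1)$ and $\langle A_k,A_j\rangle=-2$ for $k\neq j$. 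One slip on your side: in your displayed matrix recursion the coefficient of $B_j$ should be $\frac{1}{n-j}$, not $\frac{1}{n-j+1}$; this is the coefficient your vector recursion and telescoping identity correctly use.
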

\begin{proof}
By Lemma \ref{lem:lem-5-2} it follows that 
\[
B_{k}=A_{k}+\frac{1}{n-k+1}\ \left(A_{1}+\dots+A_{k-1}\right)
\]
\[
=L(e_{k})+\frac{1}{n-k+1}\ \left(L(e_{1})+\dots+L(e_{k-1})\right)
\]
\[
=L(e_{k}+\frac{1}{n-k+1}\left(e_{1}+\dots+e_{k-1}\right))=L\left(x^{k}\right),
\]
which completes the proof.
\end{proof}
\begin{corollary}
\textbf{\label{cor:For-any-skew-symmetric-5-4}}For any skew-symmetric
matrix $S$ in ${\mathbb{R}}^{n\times n}$, we have 
\[
{\mathcal{P}}_{\mathcal{A}}S=\frac{1}{2n}\sum^{n-1}_{k=1}\frac{(n-k+1)\left\langle S,B_{k}\right\rangle }{n-k}B_{k}.
\]
\end{corollary}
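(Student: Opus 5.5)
The plan is to use the standard expansion of an orthogonal projection in an orthogonal basis, together with the norms computed in Lemma \ref{lem:lem-5-2}. In this section the inner product is the unweighted Frobenius product, so $\mathcal{P}_{\mathcal{A}}$ is the orthogonal projection of $\mathcal{S}_{n}$ onto $\mathcal{A}_{n}$ defined by (\ref{eq:eq-4-2}) with $\varrho=e$.

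First, I record that $\{B_{1},\dots,B_{n-1}\}$ is an orthogonal basis of $\mathcal{A}_{n}$. The $B_{k}$ are produced by Gram--Schmidt from the basis $\{A_{1},\dots,A_{n-1}\}$ of Theorem \ref{th:every-additively-consistent-matrix-4-2}, so they are pairwise orthogonal and span $\mathcal{A}_{n}$. None of them is zero, since Lemma \ref{lem:lem-5-2} gives $\langle B_{k},B_{k}\rangle=\frac{2n(n-k)}{n-k+1}>0$ for $k\le n-1$.

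Next comes the characterization of the best approximation. The minimizer $\mathcal{P}_{\mathcal{A}}S$ in (\ref{eq:eq-4-2}) is characterized by $S-\mathcal{P}_{\mathcal{A}}S\perp\mathcal{A}_{n}$. This follows by expanding $\|S-A\|_{F}^{2}$ around a candidate and using uniqueness from Lemma \ref{lem:lem-4-1}. I then set
\[
P=\sum_{k=1}^{n-1}\frac{\langle S,B_{k}\rangle}{\langle B_{k},B_{k}\rangle}B_{k}\in\mathcal{A}_{n}.
\]
By orthogonality of the $B_{k}$, $\langle S-P,B_{j}\rangle=\langle S,B_{j}\rangle-\langle S,B_{j}\rangle=0$ for every $j$. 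Hence $S-P\perp\mathcal{A}_{n}$, and for any $A\in\mathcal{A}_{n}$ the Pythagorean identity gives $\|S-A\|^{2}=\|S-P\|^{2}+\|P-A\|^{2}$. So $P$ is the minimizer, and $P=\mathcal{P}_{\mathcal{A}}S$.

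Finally, I substitute $\langle B_{k},B_{k}\rangle=\frac{2n(n-k)}{n-k+1}$ from Lemma \ref{lem:lem-5-2}. The coefficient becomes $\frac{(n-k+1)\langle S,B_{k}\rangle}{2n(n-k)}$, which is exactly the stated formula.

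No step is a real obstacle. The only points needing care are that the norm formula of Lemma \ref{lem:lem-5-2} holds for every $k\le n-1$, including the base case $k=1$, which gives $2(n-1)$, and that the norm used is the unweighted Frobenius norm, so the result is consistent with Lemma \ref{lem:lem-4-1} for $\varrho=e$. Optionally, one could cross-check against Lemma \ref{lem:lem-4-1}: with $B_{k}=L(x^{k})$ from Theorem \ref{thm:thm-5-3} and $S$ skew-symmetric, $\langle S,L(x)\rangle=2\langle Se,x\rangle$, which reproduces row means. This check is not needed for the proof.
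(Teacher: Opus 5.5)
Your proposal is correct and follows the paper's proof: you expand $\mathcal{P}_{\mathcal{A}}S$ in the orthogonal basis $\{B_{1},\dots,B_{n-1}\}$ and substitute $\langle B_{k},B_{k}\rangle=\frac{2n(n-k)}{n-k+1}$. You also spell out the Pythagorean justification that the paper takes for granted, and you correctly take the norm from Lemma~\ref{lem:lem-5-2}, where the paper only says the result follows from Theorem~\ref{thm:thm-5-3}.
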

\begin{proof}
Since $\mathrm{\{}$$B_{1},\dots,B_{n-1}\}$ is an orthogonal basis
of ${\mathcal{A}}_{n},$ we have 
\[
\mathcal{P}_{\mathcal{A}}S=\sum^{n-1}_{k=1}\frac{\left\langle S,B_{k}\right\rangle }{\left\langle B_{k},B_{k}\right\rangle }B_{k}.
\]
Hence the corollary follows from Theorem \ref{thm:thm-5-3}.
\end{proof}
\begin{example}
In the case ${\mathbb{R}}^{4\times4}$ vectors $x^{k}$ are equal
to:
\[
x^{1}=\left(1,0,0,0\right)^{T},x^{2}=\left(\frac{1}{3},1,0,0\right)^{T},x^{3}=\left(\frac{1}{2},\frac{1}{2},1,0\right)^{T}.
\]
Hence, one obtains the orthogonal basis $B_{k}=x^{k}\ominus x^{k}$
of $\mathcal{A}_{4}$: 
\[
B_{1}=\left[\begin{array}{ccc}
0 & 1 & \begin{array}{cc}
1 & 1\end{array}\\
-1 & 0 & \begin{array}{cc}
0 & 0\end{array}\\
\begin{array}{c}
-1\\
-1
\end{array} & \begin{array}{c}
0\\
0
\end{array} & \begin{array}{c}
\begin{array}{cc}
0 & 0\end{array}\\
\begin{array}{cc}
0 & 0\end{array}
\end{array}
\end{array}\right],
\]
\[
B_{2}=\left[\begin{array}{ccc}
0 & -\frac{2}{3} & \begin{array}{cc}
\frac{1}{3} & \frac{1}{3}\end{array}\\
\frac{2}{3} & 0 & \begin{array}{cc}
1 & 1\end{array}\\
\begin{array}{c}
-\frac{1}{3}\\
-\frac{1}{3}
\end{array} & \begin{array}{c}
-1\\
-1
\end{array} & \begin{array}{c}
\begin{array}{cc}
0 & 0\end{array}\\
\begin{array}{cc}
0 & 0\end{array}
\end{array}
\end{array}\right]\ ,\ \ B_{3}=\left[\begin{array}{ccc}
0 & 0 & -\begin{array}{cc}
\frac{1}{2} & \frac{1}{2}\end{array}\\
0 & 0 & \begin{array}{cc}
-\frac{1}{2} & \frac{1}{2}\end{array}\\
\begin{array}{c}
\frac{1}{2}\\
-\frac{1}{2}
\end{array} & \begin{array}{c}
\frac{1}{2}\\
-\frac{1}{2}
\end{array} & \begin{array}{c}
\begin{array}{cc}
0 & 1\end{array}\\
\begin{array}{cc}
-1 & 0\end{array}
\end{array}
\end{array}\right].
\]
By Theorem \ref{thm:thm-5-3} it follows directly that 
\[
\mathcal{P}_{\mathcal{A}}S=\frac{\left\langle S,B_{1}\right\rangle }{\left\langle B_{1},B_{1}\right\rangle }\ B_{1}+\frac{\left\langle S,B_{2}\right\rangle }{\left\langle B_{2},B_{2}\right\rangle }\ B_{2}+\frac{\left\langle S,B_{3}\right\rangle }{\left\langle B_{3},B_{3}\right\rangle }\ B_{3}\ \ \ (S\in\mathcal{S}_{4})
\]
where 
\[
\frac{\left\langle S,B_{1}\right\rangle }{\left\langle B_{1},B_{1}\right\rangle }=\frac{s_{12}+s_{13}+s_{14}}{3},
\]
\[
\frac{\left\langle S,B_{2}\right\rangle }{\left\langle B_{2},B_{2}\right\rangle }=\frac{{-2s}_{12}+s_{13}+s_{14}+3s_{23}+3s_{24}}{4},
\]
and 
\[
\frac{\left\langle S,B_{3}\right\rangle }{\left\langle B_{3},B_{3}\right\rangle }=\frac{-s_{12}+s_{13-}s_{23}+3s_{24}}{2}.
\]
\end{example}

\section{Orthogonal windowing of PC matrices\label{sec:Orthogonal-windowing-of}}

\noindent The classical definition of the discrete Fourier transform
$\mathbb{R}^{n}\ni x\longrightarrow\mathcal{F}(x)\in\mathbb{R}^{n}$
asserts that the i-th component of $\mathcal{F}(x)$ is equal to:
\[
\mathcal{F}_{i}\left(x\right)=\sum^{n}_{k=1}x_{k}\omega^{\left(i-1\right)\left(k-1\right)}\,\,\,\text{for}\,\,\,\left(1\le i\le n\right),
\]

\noindent where $\omega$ is a principal n-th root of unity. It is
an orthogonal projection whose values are equal to coefficients of
the interpolating polynomial with nodes at the n-th roots $1,\omega,\dots,{\omega}^{n-1}$
of unity.

\noindent This definition has inspired us to introduce the following
projection $\mathcal{P}_{\mathcal{C}}:$ $\mathcal{M}_{n}\longrightarrow\mathcal{C}_{n}$
of the abelian group $\mathcal{M}_{n}=(\mathcal{M}_{n},\cdot)$ of
$PC$ matrices in $\mathbb{R}^{n\times n}$ onto the consistent subgroup
$\mathcal{C}_{n}=\left(\mathcal{C}_{n},\cdot\right),$ where the dot
denotes the Hadamard product of matrices.
\begin{definition}
\noindent\label{def:def-6-1}Le\textbf{t }$M=[m_{ij}]$ be a $PC$
matrix in ${\mathbb{R}}^{n\times n}.$ If $\varrho=\left({\varrho}_{i}\right)$
is a vector of $n$ positive weights, then the $PC$ projection $\mathcal{P}_{\mathcal{C}}M=[r_{ij}]$
of $M$ onto $\mathcal{C}_{n}$ is a consistent matrix having elements
equal to: 
\begin{equation}
r_{ij}=\left[\prod^{n}_{k=1}\left(\frac{m_{ik}}{m_{jk}}\right)^{\varrho_{k}}\right]^{1/\left|\varrho\right|}\ \ \ \ (1\le i,j\le n).\label{eq:eq-6-1}
\end{equation}
The domain of $\mathcal{P}_{\mathcal{C}}$ can be extended to all
positive matrices. There is a close relationship between the $PC$
projection $\mathcal{P}_{\mathcal{C}}$ and orthogonal projection
$\mathcal{P}_{\mathcal{A}}$ investigated in Theorem~\ref{thm:thm-5-3}.
\end{definition}
\begin{lemma}
We have \label{lem:lem-6-2} 
\[
\mathcal{P}_{\mathcal{C}}M=\exp(\mathcal{P}_{\mathcal{A}}(\ln M))
\]
for every $M\in\mathcal{M}_{n}.$
\end{lemma}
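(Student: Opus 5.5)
The plan is to compute both sides entrywise and compare, using Lemma \ref{lem:lem-4-1} as the description of $\mathcal{P}_{\mathcal{A}}$ with respect to the weighted Frobenius inner product. Put $S=\ln M=[s_{ij}]$ with $s_{ij}=\ln m_{ij}$. Since $M$ is reciprocal, $S$ is skew-symmetric, so Lemma \ref{lem:lem-4-1} applies. It gives
\[
\mathcal{P}_{\mathcal{A}}S=[y_{i}-y_{j}],\qquad y_{i}=\frac{1}{\left|\varrho\right|}\sum^{n}_{k=1}\varrho_{k}s_{ik}.
\]
We use the normalization (\ref{eq:eq-4-4}) for $y_{n}$, so this formula holds for every $i=1,\dots,n$. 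Because the matrix $[y_{i}-y_{j}]$ does not depend on the additive constant in $y$, the choice of $y_{n}$ is harmless.

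Next we compute the $(i,j)$ entry of $\mathcal{P}_{\mathcal{A}}S$ directly:
\[
y_{i}-y_{j}=\frac{1}{\left|\varrho\right|}\sum^{n}_{k=1}\varrho_{k}\left(\ln m_{ik}-\ln m_{jk}\right)=\ln\left[\prod^{n}_{k=1}\left(\frac{m_{ik}}{m_{jk}}\right)^{\varrho_{k}}\right]^{1/\left|\varrho\right|}.
\]
Applying the entrywise exponential then yields exactly $r_{ij}$ from (\ref{eq:eq-6-1}). This proves $\exp(\mathcal{P}_{\mathcal{A}}(\ln M))=\mathcal{P}_{\mathcal{C}}M$. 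As a by-product, $\mathcal{P}_{\mathcal{C}}M=w\oslash w$ with $w=\exp(y)$, the generalized geometric-mean vector of Theorem \ref{the:theorem-3-2}, so $\mathcal{P}_{\mathcal{C}}M$ is indeed consistent.

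The only point that needs care is that the weights in Definition \ref{def:def-6-1} and in the inner product defining $\mathcal{P}_{\mathcal{A}}$ must be the same vector $\varrho$. We will state that $\mathcal{P}_{\mathcal{A}}$ is taken with respect to $\left\langle\cdot,\cdot\right\rangle_{F(\varrho)}$. In the unweighted case $\varrho=e$ used in Section \ref{sec:Orthogonal-basis-of}, the projection also equals the orthogonal expansion of Corollary \ref{cor:For-any-skew-symmetric-5-4}, by uniqueness of the orthogonal projection. Beyond this bookkeeping there is no real obstacle: the identity is just Lemma \ref{lem:lem-4-1} rewritten through the group isomorphism $\ln:\mathcal{M}_{n}\to\mathcal{S}_{n}$.
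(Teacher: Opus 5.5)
Your proposal is correct and is essentially the paper's own argument: write the entries of $\mathcal{P}_{\mathcal{A}}(\ln M)$ via Lemma~\ref{lem:lem-4-1} as weighted row-mean differences $\frac{1}{\left|\varrho\right|}\sum_{k}\varrho_{k}(\ln m_{ik}-\ln m_{jk})$, then exponentiate entrywise to obtain (\ref{eq:eq-6-1}). One small precision: what is harmless is shifting the whole vector $y$ by a constant, not changing $y_{n}$ alone; it is the normalization (\ref{eq:eq-4-4}) you invoke, which the paper also uses implicitly, that makes the row-mean formula valid for every index, including entries with $i=n$ or $j=n$.
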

\begin{proof}
\noindent In view of Lemma \ref{lem:lem-4-1} the matrix $\mathcal{P}_{\mathcal{A}}\left(\ln M\right)=[a_{ij}]$
has elements equal to

\noindent
\[
a_{ij}=\frac{1}{\left|\varrho\right|}\sum^{n}_{k=1}\varrho_{k}(\ln m_{ik}-\ln m_{jk})
\]

\[
=\ln\left(\left[\prod^{n}_{k=1}\left(\frac{m_{ik}}{m_{jk}}\right)^{\varrho_{k}}\right]^{1/\left|\varrho\right|}\right).
\]
Hence, it remains to compute the exponential function of both sides.
\end{proof}
\noindent Now we formulate the method of $PC$ windowing by an analogy
of the \emph{FFT} windowing in Fourier analysis. For this purpose,
let the inner power of the matrices $M=[m_{ij}]$ and $B=[b_{ij}]$
in ${\mathbb{R}}^{n\times n}$ be defined by

\noindent
\begin{equation}
\left[M,B\right]=\prod^{n}_{i,j=1}(m_{ij})^{b_{ij}}.\label{eq:eq-6-2}
\end{equation}

\noindent This definition helps to simplify our description of orthogonal
decomposition for the $PC$ projection ${\mathcal{P}}_{\mathcal{C}}:{\mathcal{M}}_{n}\longrightarrow{\mathcal{C}}_{n}.$
As in the previous section, we assume that the basis $\mathrm{\{}$$B_{1},\dots,B_{n-1}\}$
of ${\mathcal{A}}_{n}$ is orthogonal with respect to the standard
Frobenius inner product in ${\mathbb{R}}^{n\times n}.$ It guarantees
validity of explicit formulae for $B_{k}$ that are presented in Theorem
\ref{thm:thm-5-3} and Corollary \ref{cor:For-any-skew-symmetric-5-4}.
\begin{theorem}
\noindent\label{thm:For-any-matrix-6-2}For any matrix $M=[m_{ij}]$
in ${\mathcal{M}}_{n}$, we have

\noindent
\[
\mathcal{P}_{\mathcal{C}}M=\prod^{n-1}_{k=1}[M,B_{k}]^{\ \frac{n-k+1}{2n(n-k)}B_{k}}.
\]
\end{theorem}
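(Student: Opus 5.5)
The natural route is to transport the additive expansion of Corollary~\ref{cor:For-any-skew-symmetric-5-4} through the exponential map, using Lemma~\ref{lem:lem-6-2}. First I would read the notation correctly: for a scalar $c>0$ and a matrix $B=[b_{ij}]$, the expression $c^{B}$ is the entry-wise power $[c^{b_{ij}}]$, and the product over $k$ is the Hadamard product in $\mathcal{M}_{n}$. With this convention, $\exp(\alpha B)=(e^{\alpha})^{B}$ and $\exp(\sum_{k}X_{k})=\prod_{k}\exp X_{k}$ entry-wise. Here $\exp$ is the entry-wise exponential, which is a group isomorphism of $(\mathcal{S}_{n},+)$ onto $(\mathcal{M}_{n},\cdot)$.

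Step one: put $S=\ln M\in\mathcal{S}_{n}$. By Lemma~\ref{lem:lem-6-2}, $\mathcal{P}_{\mathcal{C}}M=\exp(\mathcal{P}_{\mathcal{A}}S)$. Since this section uses the standard Frobenius inner product (unit weights $\varrho$), Corollary~\ref{cor:For-any-skew-symmetric-5-4} gives
\[
\mathcal{P}_{\mathcal{A}}S=\sum_{k=1}^{n-1}c_{k}\langle S,B_{k}\rangle B_{k},\qquad c_{k}=\frac{n-k+1}{2n(n-k)}.
\]

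Step two: identify the scalar $\exp\langle \ln M,B_{k}\rangle$ with the inner power (\ref{eq:eq-6-2}). Indeed,
\[
\exp\langle \ln M,B_{k}\rangle=\exp\Bigl(\sum_{i,j}b^{k}_{ij}\ln m_{ij}\Bigr)=\prod_{i,j}m_{ij}^{b^{k}_{ij}}=[M,B_{k}].
\]

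Step three: apply the homomorphism property of $\exp$. This gives
\[
\exp(\mathcal{P}_{\mathcal{A}}S)=\prod_{k=1}^{n-1}\exp\bigl(c_{k}\langle S,B_{k}\rangle B_{k}\bigr).
\]
Entry-wise, each factor equals $\bigl(e^{\langle S,B_{k}\rangle}\bigr)^{c_{k}b^{k}_{ij}}=[M,B_{k}]^{c_{k}b^{k}_{ij}}$, which is exactly the $(i,j)$ entry of $[M,B_{k}]^{c_{k}B_{k}}$. This yields the formula in Theorem~\ref{thm:For-any-matrix-6-2}.

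There is no real analytic obstacle here. The only delicate point is bookkeeping: making sure the corollary is invoked with the unweighted inner product, for which the constants $\langle B_{k},B_{k}\rangle=2n(n-k)/(n-k+1)$ of Lemma~\ref{lem:lem-5-2} hold. I would also state explicitly the interpretation of the matrix exponent $c^{B}$ as an entry-wise power, since the statement relies on it implicitly.
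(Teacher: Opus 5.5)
Your proposal is correct and follows the paper's proof: combine Lemma~\ref{lem:lem-6-2} with Corollary~\ref{cor:For-any-skew-symmetric-5-4}, identify $\exp\langle \ln M,B_k\rangle=[M,B_k]$ from the definition of the inner power, and read off the entries of the exponential. You also state explicitly that the weight vector $\varrho$ must be constant, so that the constants $\langle B_k,B_k\rangle=2n(n-k)/(n-k+1)$ apply; the paper leaves this assumption implicit.
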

\begin{proof}
\noindent By using Lemma \ref{lem:lem-6-2} and Corollary \ref{cor:For-any-skew-symmetric-5-4}
we readily obtain 
\[
\mathcal{P}_{\mathcal{C}}M=\exp(\mathcal{P}_{\mathcal{A}}(\ln M))=\mathrm{exp\ }\left(\sum^{n-1}_{k=1}\frac{\left(n-k+1\right)\left\langle \ln M,B_{k}\right\rangle }{2n(n-k)}B_{k}\right).
\]

\noindent Thus the matrix $\mathcal{P}_{\mathcal{C}}M=[r_{ij}]$ has
elements equal to

\noindent
\[
r_{ij}=\exp\left(\sum^{n-1}_{k=1}\frac{(n-k+1)b^{\left(k\right)}_{ij}}{2n(n-k)}\sum^{n}_{u,v=1}b^{(k)}_{uv}\ln m_{uv}\right),
\]

\noindent where $B_{k}=\left[b^{\left(k\right)}_{ij}\right]$.

\noindent Hence it follows from the definition of the inner power
$[\cdot,\cdot]$ that 
\[
r_{ij}=\exp\left(\sum^{n-1}_{k=1}\frac{(n-k+1)b^{\left(k\right)}_{ij}}{2n(n-k)}\mathrm{ln}[M,B_{k}]\right)=\prod^{n-1}_{k=1}[M,B_{k}]^{c^{\left(k\right)}_{ij}}
\]

\noindent where $c^{\left(k\right)}_{ij}=\frac{n-k+1}{2n(n-k)}b^{\left(k\right)}_{ij}$.
So this is equivalent to our assertion.
\end{proof}

\section{Saaty windowing of PC matrices\label{sec:Saaty-windowing-of}}

\noindent The celebrated Saaty's AHP \citep{Saaty1977asmf,Saaty1980tahp}
is based on the Perron theorem \citep{Gantmaher2000ttom,Perron1907ctdm}
which asserts that a positive matrix $M$ in ${\mathbb{R}}^{n\times n}$
has a positive eigenvalue $\lambda_{max}\left(M\right)$\textit{ }which
is a simple root of the characteristic equation $\det\left(M-\lambda I\right)=0$
and exceeds the moduli of all the other eigenvalues. Additionally,
to this maximal eigenvalue $\lambda_{max}>0$ there is a corresponding
positive eigenvector $x^{+}=\left(x^{+}_{1},\dots,\ x^{+}_{n}\right)^{T}>0$
which is unique, up to a positive factor.
\begin{remark}
\noindent\label{rem:rem-7-1}The Perron pair $(\lambda_{max},x^{+})$
of $M>0$ is strongly unique in the set of all eigenpairs $\left(\lambda,y\right)$
such that $My=\lambda y\ \left(y\neq0\right).$ This means that if
$\left(\lambda,y\right)$ is an eigenpair of $M$ such that $\left|\lambda\right|<{\lambda}_{max},$
then it is not true that either $y\ge0$ or $y\le0.$ Indeed, let
$z>0$ denote an eigenvector of the transpose matrix $M^{T}$ corresponding
to ${\lambda}_{max}.$ Then 
\[
\lambda_{max}\left\langle y,z\right\rangle =\left\langle y,M^{T}\mathrm{z}\right\rangle =\left\langle My,z\right\rangle =\lambda\left\langle y,z\right\rangle 
\]

\noindent and so $\left\langle y,z\right\rangle =0.$ This is possible
only when the vector $y\neq0$ has both negative and positive components.
\end{remark}
\begin{definition}
\noindent\textbf{\label{def:def-7-2}} If $({\lambda}_{max},x^{+})$
denotes a Perron pair of a matrix $M$ in the group $\left(\mathcal{M}_{n},\cdot\right)$,
then the consistent projection 
\[
\mathcal{R}_{\mathcal{C}}M=x^{+}\oslash x^{+}\ \ \ (M\in\mathcal{M}_{n})
\]

\noindent is said to be the Saaty projection of ${\mathcal{M}}_{n}$
onto ${\mathcal{C}}_{n}.$ Note that one can extend the domain of
${\mathcal{R}}_{\mathcal{C}}$ to all positive matrices $M$ in ${\mathbb{R}}^{n\times n}.$
\end{definition}
\noindent The Saaty projection ${\mathcal{R}}_{\mathcal{C}}$ is widely
used in applications of AHP in many areas, including psychology, product
management, strategic planning, finance and banking, market research,
and others; see e.g. Saaty \citep{Saaty1980tahp}, Golden et al. \citep{Golden1989tahp},
and Sipahi et al. \citep{Sipahi2010tahp}. This is still the case
in spite of criticism of the eigenvector approach to pairwise comparisons
found in the papers of Blanquero et al. \citep{Blanquero2006iewf},
Gass and Rapcsak \citep{Gass2004svdi}, Bozoki and Rapcsak \citep{Bozoki2008osak},
Gass and Standard \citep{Gass2002copr}, and the references therein.
An additional disadvantage of this technique is provided in the next
example.
\begin{example}
\noindent Let $P=\left[p_{ij}\right]$ be a stochastic matrix in ${\mathbb{R}}^{n\times n}.$
This means that $P>0$ and the sum of each row of $P$ is $1$, i.e.,

\noindent
\[
\sum^{n}_{j=1}p_{ij}=1\ \ \ \left(i=1,\dots,n\right).
\]

\noindent Hence $P$ has the eigenvalue $1$ with the positive eigenvector
$e=\left(1,\dots,1\right)\in\mathbb{R}^{n}.$ By Remark \ref{rem:rem-7-1}
it follows that $(1,e)$ is a Perron pair of $P.$ Thus, Saaty's projection
${\mathcal{R}}_{\mathcal{C}}P=e\oslash e\ $ has all elements equal
to $1,$ and so is independent of $P.$ In other words, Saaty's projection
does not distinguish between any two stochastic matrices.
\end{example}
\noindent The last example shows that the usefulness of the Saaty
projection ${\mathcal{R}}_{\mathcal{C}}$ may be diminished for those
positive matrices which are close to stochastic matrices. On the other
hand, it suggests the following characterization of ${\mathcal{R}}_{\mathcal{C}}$
as a Hausdorff quotient of the scaled matrix ${\lambda}^{-1}_{max}M$
by the stochastic matrix $P$.
\begin{proposition}
\noindent\label{prop:prop-7-3}For every positive matrix $M$ in
${\mathbb{R}}^{n\times n},$ there exists a stochastic matrix $P$
such that 
\begin{equation}
\mathcal{R}_{\mathcal{C}}M=\frac{\lambda^{-1}_{max}M}{P}.\label{eq:eq-7-1}
\end{equation}
\end{proposition}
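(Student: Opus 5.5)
The identity (\ref{eq:eq-7-1}) is to be read entry-wise (a Hadamard quotient): we want a stochastic $P$ with $\lambda^{-1}_{max}M = (x^{+}\oslash x^{+})\cdot P$. So the natural move is to define $P$ by solving this for $P$ and then check the row sums. Take the Perron pair $(\lambda_{max},x^{+})$ of $M>0$ from Perron's theorem, as used in Definition \ref{def:def-7-2}, and set
\[
p_{ij}=\frac{\lambda^{-1}_{max}m_{ij}}{x^{+}_{i}/x^{+}_{j}}=\frac{m_{ij}x^{+}_{j}}{\lambda_{max}x^{+}_{i}}\ \ \ (i,j=1,\dots,n).
\]
In matrix form this is $P=\lambda^{-1}_{max}D^{-1}MD$ with $D=\text{diag}(x^{+}_{1},\dots,x^{+}_{n})$. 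By construction the Hadamard quotient $\lambda^{-1}_{max}M/P$ equals $[x^{+}_{i}/x^{+}_{j}]=x^{+}\oslash x^{+}=\mathcal{R}_{\mathcal{C}}M$.

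Next we verify that $P$ is stochastic. Positivity is immediate: $M>0$, $x^{+}>0$ and $\lambda_{max}>0$ make every $p_{ij}$ positive. For the row sums, the eigenvalue equation $Mx^{+}=\lambda_{max}x^{+}$ gives
\[
\sum^{n}_{j=1}p_{ij}=\frac{(Mx^{+})_{i}}{\lambda_{max}x^{+}_{i}}=1\ \ \ (i=1,\dots,n).
\]

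There is no real obstacle here. The only points needing care are reading (\ref{eq:eq-7-1}) as an entry-wise quotient, and noting that the choice of scaling of $x^{+}$ does not matter: both $P$ and $x^{+}\oslash x^{+}$ are invariant under $x^{+}\mapsto cx^{+}$, so the statement is well defined. As a consistency check, Perron's theorem applied to the stochastic matrix $P$ gives the pair $(1,e)$, as in the preceding example. This matches the interpretation of $P=\lambda^{-1}_{max}D^{-1}MD$ as a diagonal similarity of the scaled matrix $\lambda^{-1}_{max}M$.
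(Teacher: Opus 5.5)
Your proposal is correct and follows the paper's proof essentially verbatim. Like the paper, you define $P$ entrywise by $p_{ij}=m_{ij}x^{+}_{j}/(\lambda_{max}x^{+}_{i})$, so that the Hadamard-quotient identity holds by construction, and then use $Mx^{+}=\lambda_{max}x^{+}$ to show that every row of $P$ sums to $1$; your explicit remarks on the positivity of $P$ and its invariance under rescaling of $x^{+}$ are minor additions the paper leaves implicit.
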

\begin{proof}
\noindent Let (${\lambda}_{max},x^{+})$ be the Perron pair of $M.$
Then it follows from Definition \ref{def:def-7-2} that the matrix
identity (\ref{eq:eq-7-1}) is equivalent to 
\[
\frac{x^{+}_{i}}{x^{+}_{j}}=\lambda^{-1}_{max}\frac{m_{ij}}{p_{ij}}\ \ \ \left(i,j=1,\dots,n\right),
\]

\noindent which defines the matrix $P=\left[p_{ij}\right]>0.$ Since
$Mx^{+}=\lambda_{max}x^{+},$ we have

\noindent
\[
\sum^{n}_{j=1}p_{ij}=\frac{\mathrm{\ 1}}{\lambda_{max}x^{+}_{i}}\sum^{n}_{j=1}m_{ij}x^{+}_{j}=\frac{\lambda_{max}x^{+}_{i}}{\lambda_{max}x^{+}_{i}}=1\,\,\,\text{for}\,\,\,\left(i=1,\dots,n\right),
\]

\noindent which completes the proof.
\end{proof}
\noindent Now we return to study the $PC$ projection $\mathcal{P}_{\mathcal{C}}=\exp(\mathcal{P}_{\mathcal{A}}(\ln M))$,
where $\mathcal{P}_{\mathcal{A}}$ is the orthogonal projection onto
the subspace $\mathcal{A}_{n}$ of space $\mathbb{R}^{n\times n}$
endowed with the weighted Frobenius inner product $\left\langle \cdot,\cdot\right\rangle _{F(\varrho)}$.
By Definition \ref{def:def-6-1} we have ${\mathcal{P}}_{\mathcal{C}}M=\left[r_{ij}\right],$
where

\noindent
\begin{equation}
r_{ij}=\left[\prod^{n}_{k=1}\left(\frac{m_{ik}}{m_{jk}}\right)^{\varrho_{k}}\right]^{1/\left|\varrho\right|}\ \ \ \ (1\le i,j\le n).\label{eq:eq-7-2}
\end{equation}

\noindent In order to compare ${\mathcal{P}}_{\mathcal{C}}$ with
${\mathcal{R}}_{\mathcal{C}},$ we suppose that the domain of ${\mathcal{P}}_{\mathcal{C}}$
is extended to all positive matrices $M$ in ${\mathbb{R}}^{n\times n}$
and consider first the important special case when ${\varrho}_{k}=1$
for every $k.$
\begin{example}
\noindent Let $Q=\left[q_{ij}\right]$ be a multiplicatively stochastic
matrix in ${\mathbb{R}}^{n\times n}.$ This means that $Q>0$ and
the product of each row of $Q$ is $1$, i.e., 
\[
\prod^{n}_{j=1}q_{ij}=1\,\,\,\left(i=1,\dots,n\right).
\]

\noindent Then it follows from \ref{rem:rem-7-1} that the matrix
${\mathcal{P}}_{\mathcal{C}}Q=e\oslash e\ $ has all elements equal
to $1,$ and so is independent of $Q.$ In other words, the $PC$
projection $\mathcal{P}_{\mathcal{C}}$ does not distinguish between
any two multiplicatively stochastic matrices. For the weighted inner
product, the same conclusion is true, whenever each row of $Q$ satisfies
$\frac{1}{\left|\varrho\right|}\sum^{n}_{j=1}\varrho_{j}q_{ij}=0\,\,\,\,\left(i=1,\dots,n\right).$

The last two examples support an interesting result of Golany and
Kres presented in \citep{Golany1993ameo} who compared a few methods
commonly used in pairwise comparisons and concluded that there is
no windowing method that is superior to the other ones in all cases.
Note that this conclusion contradicts the claim of Saaty and Vargas
\citep{Saaty1984coel} that the eigenvector method is superior to
the logarithmic least-squares method.
\end{example}
\begin{remark}
\noindent\label{rem:rem-7-4}In Section \ref{sec:Consistent-approximation-of}
we have briefly discussed how to choose a weighted Frobenius norm
$\left\Vert \cdot\right\Vert _{F(\varrho)}$ in cases of randomly
disturbed elements $m_{ij}$ of a given $PC$ matrix $M.$ It would
be of great interest to undertake numerical experiments on the following
choice of the positive weight vector $\varrho=\left({\varrho}_{1},\dots,{\varrho}_{n}\right)$
in the definition of the logarithmic consistent problem from Section
\ref{sec:Consistent-approximation-of}:

\noindent
\[
\varrho_{1}=x^{+}_{1},\dots,\varrho_{n}=x^{+}_{n},
\]

\noindent where $x^{+}=(x^{+}_{1},\dots,x^{+}_{n})$ is the Perron
eigenvector of $M.$ Some theoretical considerations suggest that
such a choice may be useful in cases when Saaty's eigenvector approach
fails.
\end{remark}

\section{SVD windowing of PC matrices\label{sec:SVD-windowing-of}}

\noindent It is clear that a good windowing of the $PC$ matrix $M>0$
should reflect its structure to the highest possible degree. Therefore,\textbf{
}in this section we propose a new class of windowing. It is inspired
by an interesting paper \citep{Gass2004svdi} of Gass and Rapcsak,
who discovered an original method to capture consistency by using
the celebrated singular value decomposition (SVD) from matrix algebra
\citep{Golub1996mc}. Their method can be viewed as an attempt to
improve the eigenvector method according to Saaty \citep{Saaty1980tahp}.

\noindent In order to define the SVD windowing, we recall the singular
value decomposition of $M$ \citep{Golub1996mc}: 
\begin{equation}
M=U^{T}\mathrm{\Sigma}V\,\,\text{where}\,\,\mathrm{\Sigma}=\text{diag}\left(\sigma_{1},\dots,\sigma_{n}\right),\label{eq:eq-8-1}
\end{equation}
in which $M$ may be any matrix in $\mathbb{R}^{n\times n}$, the
matrices $U=\left[u_{1},\dots,u_{n}\right]$ and $V=[v_{1},\dots,v_{n}]$
are orthogonal, and $\sigma_{1}\ge\dots\ge\sigma_{n}\ge0.$ The columns
$u_{k}$ and $v_{k}$ of $U$ and $V$ are called singular vectors
of $M$ corresponding to the singular value $\sigma_{k}$ for $\left(1\le k\le n\right)$.
They satisfy the following conditions:
\begin{align}
Mv_{k}=\sigma_{k}u_{k}, & M^{T}u_{k}=\sigma_{k}v_{k},\nonumber \\
MM^{T}u_{k}=\sigma^{2}_{k}u_{k}, & M^{T}Mv_{k}=\sigma^{2}_{k}v_{k}\label{eq:eq-8-2}
\end{align}
Additionally, if $M$ is a positive matrix in $\mathbb{R}^{n\times n},$
then it follows from the Perron theorem that ${\sigma}_{1}>{\sigma}_{2}.\ $
\begin{definition}
\noindent\label{def:def-8-1}Le\textbf{t }$M=U^{T}\mathrm{\Sigma}V$
denote the singular value decomposition of $M>0$ in $\mathbb{R}^{n\times n}${\normalsize ,}
where $U=\left[u_{1},\dots,u_{n}\right]$, $V=\left[v_{1},\dots,v_{n}\right]$
and $\mathrm{\Sigma}=\text{diag}\left(\sigma_{1},\dots,\sigma_{n}\right)$.
Then the consistent projection of the group ${\mathcal{M}}_{n}$ onto
${\mathcal{C}}_{n}$ is defined by 
\[
\mathcal{T}_{\mathcal{C}}M=u_{1}\oslash u_{1}\,\,\,\text{for}\,\,\,(M\in\mathcal{M}_{n}).
\]

\noindent This is said to be the SVD projection. 
\end{definition}
Note that the domain of ${\mathcal{T}}_{\mathcal{C}}$ can be extended
to all positive matrices $M$ in ${\mathbb{R}}^{n\times n}.$
\begin{example}
\noindent\label{ex:ex-8-2}Let $M$ be a normal positive matrix in
$\mathbb{R}^{n\times n}.$ By definition of normality $MM^{T}=M^{T}M${\normalsize .}
Hence the matrices $M$ and $M^{T}$ have a common system of orthogonal
eigenvectors ${\ u}_{k}$ \citep[Sec. IX.10]{Gantmaher2000ttom} such
that 
\[
Mu_{k}=\lambda_{k}u_{k}\ ,M^{T}u_{k}=\overline{\lambda_{k}}u_{k}\ ,MM^{T}u_{k}=M^{T}Mu_{k}=\left|\lambda_{k}\right|^{2}u_{k}
\]

\noindent for every $k=1,\dots,n.$ Thus ${\sigma}_{k}=\left|{\lambda}_{k}\right|,$
and so ${\sigma}_{1}={\lambda}_{1}$ and $\mathcal{T}_{\mathcal{C}}M=\mathcal{R}_{\mathcal{C}}M.$
Thus, the eigenvector and SVD windowing coincide for the class of
normal matrices.
\end{example}
\begin{example}
\noindent\label{exa:ex-8-3}Let $P=\left[p_{ij}\right]$ be a doubly
stochastic positive matrix in $\mathbb{R}^{n\times n}$, i.e., $P$
and $P^{T}$ are stochastic positive matrices. Hence we have 
\[
Pe=e,P^{T}e=e,PP^{T}e=P^{T}Pe=e,
\]

\noindent where $e=\left(1,\dots,1\right)\in\mathbb{R}^{n}.$ Therefore,
the Saaty and SVD projections coincide for the positive matrices $M$
and $M^{T}$ if and only if 
\[
\mathcal{R}_{\mathcal{C}}M=\mathcal{R}_{\mathcal{C}}\left(M^{T}\right)=\mathcal{T}_{\mathcal{C}}M=\mathcal{T}_{\mathcal{C}}\left(M^{T}\right)=\alpha e\oslash e
\]

\noindent for some constant $\alpha>0.$ Additionally, if a positive
matrix $P$ is stochastic but not doubly stochastic, then $\mathcal{R}_{\mathcal{C}}P=\mathcal{T}_{\mathcal{C}}P=e\oslash e,\ \ \mathcal{R}_{\mathcal{C}}\left(P^{T}\right)\neq\alpha e\oslash e$
and $\mathcal{T}_{\mathcal{C}}\left(P^{T}\right)\neq\alpha e\oslash e$
for every $\alpha>0.$
\end{example}
\noindent As an addendum to the last example, the important Sinkhorn
theorem \citep{Sinkhorn1964arba} should be mentioned. It asserts
that to each positive matrix $M$ in $\mathbb{R}^{n\times n}$ there
is a unique doubly stochastic positive matrix of the form $D_{1}MD_{2}$,
where the diagonal matrices $D_{1}$ and $D_{2}$ with positive main
diagonals are unique up to a positive scalar factor. The $D_{1}MD_{2}$
matrix is obtained as a limit of the sequence of matrices generated
by alternately normalizing the rows and columns of $M$ so that the
sums of their elements are equal $1$.

\noindent In contrast with the eigenvector projection ${\mathcal{R}}_{\mathcal{C}},$
the SVD projection ${\mathcal{T}}_{\mathcal{C}}$ is not only an eigenvector
based projection but also a distance based projection. The first assertion
follows readily from (\ref{eq:eq-8-2}), while the second one is much
harder to prove. For the proof, we first embed the consistent group
$\mathcal{C}_{n}=\left(\mathcal{C}_{n},\cdot\right)$ into the supergroup
$\mathcal{G}_{n}=\left(\mathcal{G}_{n},\cdot\right)$ of all tensor
products $x\otimes y=xy^{T}$ of positive vectors $x,y\in\mathbb{R}^{n}$,
in which the Hadamard dot product satisfies 
\[
\left(x\otimes y\right)\cdot\left(z\otimes w\right)=\left\langle y,z\right\rangle \left(x\otimes w\right).
\]

\noindent Next, we consider a best approximation problem for $\mathcal{G}_{n}$
of the form 
\[
\left\Vert M-\mathcal{T}_{\mathcal{G}}M\right\Vert _{F}=\mathop{\mathrm{inf}}_{x>0,y>0}\left\Vert M-x\otimes y\right\Vert _{F},
\]
for which the following theorem, according to Eckart, Young \citep{Eckart1936taoo}
and Mirsky \citep{Mirsky1960sgfa}, holds.
\begin{theorem}
\noindent For every $M\in\mathbb{R}^{n\times n}$, we have 
\[
\mathcal{T}_{\mathcal{G}}M=\sigma_{1}u_{1}\otimes v_{1}\,\,\text{and\,\,}\left\Vert M-\mathcal{T}_{\mathcal{G}}M\right\Vert _{F}=\sum^{n}_{k=2}\sigma^{2}_{k}.
\]
\end{theorem}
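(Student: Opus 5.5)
The plan is to reduce the problem to the classical variational characterization of the largest singular value, and then to check that the unconstrained optimum is admissible, that is, a tensor product of \emph{positive} vectors. Two remarks first. The displayed error should be read as $\left\Vert M-\mathcal{T}_{\mathcal{G}}M\right\Vert ^{2}_{F}=\sum^{n}_{k=2}\sigma^{2}_{k}$, i.e. the identity holds for the squared norm. Also, the positivity constraint $x,y>0$ only makes sense for $M>0$; for a general $M$ I would prove the statement with the infimum taken over all rank-one matrices $x\otimes y$.

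\textbf{Step 1 (rank-one reduction).} Write any nonzero $x\otimes y$ as $t\,a\otimes b$ with $\left\Vert a\right\Vert =\left\Vert b\right\Vert =1$ and $t>0$. Since $\left\langle M,a\otimes b\right\rangle =a^{T}Mb$, expanding gives
\[
\left\Vert M-t\,a\otimes b\right\Vert ^{2}_{F}=\left\Vert M\right\Vert ^{2}_{F}-2t\,a^{T}Mb+t^{2}.
\]
For fixed $a,b$ this is minimized at $t=a^{T}Mb$ (or at $t=0$ if $a^{T}Mb\le0$). The minimal value is $\left\Vert M\right\Vert ^{2}_{F}-(a^{T}Mb)^{2}_{+}$. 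So the problem reduces to maximizing $a^{T}Mb$ over unit vectors.

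\textbf{Step 2 (variational characterization).} I would use the singular relations (\ref{eq:eq-8-2}), namely $Mv_{k}=\sigma_{k}u_{k}$ with $\{u_{k}\}$ and $\{v_{k}\}$ orthonormal. Expand $b=\sum\beta_{k}v_{k}$ and $a=\sum\alpha_{k}u_{k}$. Then $a^{T}Mb=\sum\sigma_{k}\alpha_{k}\beta_{k}$, and by Cauchy--Schwarz this is at most $\sigma_{1}$, with equality at $a=u_{1}$, $b=v_{1}$. Orthogonal invariance of the Frobenius norm gives $\left\Vert M\right\Vert ^{2}_{F}=\sum_{k}\sigma^{2}_{k}$. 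Hence the minimal squared error is $\sum_{k\ge2}\sigma^{2}_{k}$, attained at $\sigma_{1}u_{1}\otimes v_{1}$. This proves both formulas over all rank-$\le1$ matrices.

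\textbf{Step 3 (admissibility; the main obstacle).} It remains to show that the minimizer lies in $\mathcal{G}_{n}$. If $M>0$, then $MM^{T}>0$ and $M^{T}M>0$. By (\ref{eq:eq-8-2}), $u_{1}$ and $v_{1}$ are eigenvectors of these matrices for their largest eigenvalue $\sigma^{2}_{1}$. Since the paper already notes $\sigma_{1}>\sigma_{2}$, this eigenvalue is simple, so by Perron's theorem (and Remark \ref{rem:rem-7-1}) $u_{1}$ and $v_{1}$ can be chosen positive. The sign choice must be made consistently: if $v_{1}>0$, then $u_{1}=\sigma^{-1}_{1}Mv_{1}>0$ automatically.

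Thus $\sigma_{1}u_{1}\otimes v_{1}=x\otimes y$ with $x=\sigma_{1}u_{1}>0$ and $y=v_{1}>0$. The infimum over the subset $\{x,y>0\}$ is bounded below by the unconstrained minimum of Step 2 and attains it, so it equals that minimum.

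The only delicate point is this positivity and sign bookkeeping. In particular, $\mathcal{G}_{n}$ is not closed, so without Perron's theorem the infimum might fail to be attained. Uniqueness of the minimizer follows from the simplicity of $\sigma_{1}$.
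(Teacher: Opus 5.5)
Your proof is correct, and it necessarily takes a different route from the paper, because the paper gives no proof: it attributes the statement to Eckart, Young and Mirsky and moves on.

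\textbf{How the routes differ.} You prove the rank-one case directly from the variational characterization $\sigma_{1}=\max_{\|a\|=\|b\|=1}a^{T}Mb$, using the singular relations (\ref{eq:eq-8-2}) and Cauchy--Schwarz. The rank-one case is all the paper ever uses. The cited theorem gives more: best rank-$r$ approximations and, in Mirsky's form, all unitarily invariant norms. But it does not address what your Step 3 handles. Since $\mathcal{T}_{\mathcal{G}}$ is defined as an infimum over positive $x,y$, one must check that the unconstrained optimizer $\sigma_{1}u_{1}\otimes v_{1}$ actually lies in $\mathcal{G}_{n}$. Your Perron argument, with the consistent sign choice $u_{1}=\sigma_{1}^{-1}Mv_{1}>0$, does exactly this.

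\textbf{Your caveats are genuine corrections.} The error identity holds only for the squared norm. The phrase ``for every $M\in\mathbb{R}^{n\times n}$'' is false under the positivity constraint. For $M=-e\otimes e$ one has $\sum_{k\ge 2}\sigma_{k}^{2}=0$. Yet $\|M-x\otimes y\|_{F}^{2}>n^{2}$ for all $x,y>0$, and the infimum $n^{2}$ is not attained.

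\textbf{Uniqueness deserves one more line.} The paper tacitly needs it for $\mathcal{T}_{\mathcal{G}}$ to be a well-defined map in Theorems \ref{thm:thm-8-2} and \ref{thm:thm-8-7}. Spell out the equality case in $\sum_{k}\sigma_{k}\alpha_{k}\beta_{k}\le\sigma_{1}\sum_{k}|\alpha_{k}\beta_{k}|\le\sigma_{1}$. There $\sigma_{2}<\sigma_{1}$ forces $\alpha_{k}\beta_{k}=0$ and $|\alpha_{k}|=|\beta_{k}|$ for $k\ge 2$. Hence $\alpha_{k}=\beta_{k}=0$ for $k\ge 2$, and so $a\otimes b=u_{1}\otimes v_{1}$.
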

\noindent This theorem helps to prove that the SVD projection $\mathcal{T}_{\mathcal{C}}$
is a composite projection of two distance based projections: additively
consistent weighted projection ${\mathcal{P}}_{\mathcal{A}}$ and
tensor projection ${\mathcal{T}}_{\mathcal{G}}.$ A conclusion of
this kind is not true for the eigenvector projection ${\mathcal{R}}_{\mathcal{C}}.$

\noindent To the end of this section, the matrices $U=\left[u_{1},\dots,u_{n}\right]$,
$V=\left[v_{1},\dots,v_{n}\right]$ and $\mathrm{\Sigma}=\text{diag}\left(\sigma_{1},\dots,\sigma_{n}\right)$
are understood to be factors of the singular value decomposition $M=U^{T}\mathrm{\Sigma}V$
of $M\in{\mathbb{R}}^{n\times n}.$
\begin{theorem}
\noindent\label{thm:thm-8-2}The SVD projection ${\mathcal{T}}_{\mathcal{C}}$
is a composite projection of the distance based logarithmic and tensor
projections $\mathcal{P}_{\mathcal{C}}$ and $\mathcal{T}_{\mathcal{G}}$
such that 
\[
T_{\mathcal{C}}M=\mathcal{P}_{\mathcal{C}}\left(\mathcal{T}_{\mathcal{G}}\mathrm{\ }\mathrm{M}\right)=u_{1}\oslash u_{1}
\]

\noindent for every PC matrix $M$ in $R^{n\times n}.$
\end{theorem}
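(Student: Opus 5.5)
Given a PC matrix $M>0$, I will first apply the Eckart--Young--Mirsky theorem quoted above, which gives $\mathcal{T}_{\mathcal{G}}M=\sigma_{1}u_{1}\otimes v_{1}$. The first task is to check that this rank-one matrix is a positive matrix, so that $\mathcal{P}_{\mathcal{C}}$ (with its domain extended to all positive matrices, as noted after Definition \ref{def:def-6-1}) can be applied to it.

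\textbf{Positivity of the rank-one factor.} Since $M>0$, both $MM^{T}$ and $M^{T}M$ are positive matrices. By (\ref{eq:eq-8-2}), $u_{1}$ and $v_{1}$ are eigenvectors of $MM^{T}$ and $M^{T}M$ for their largest eigenvalue $\sigma_{1}^{2}$, which is simple by the Perron theorem since $\sigma_{1}>\sigma_{2}$. Remark \ref{rem:rem-7-1} then shows that $u_{1}$ can be chosen positive. With $u_{1}>0$, the relation $v_{1}=\sigma_{1}^{-1}M^{T}u_{1}$ gives $v_{1}>0$. Hence $N:=\mathcal{T}_{\mathcal{G}}M=\sigma_{1}u_{1}\otimes v_{1}$ is a positive matrix in $\mathcal{G}_{n}$. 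This sign normalization, together with the simplicity of $\sigma_{1}$, is what I expect to be the only delicate point; everything after it is a direct computation.

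\textbf{Computing $\mathcal{P}_{\mathcal{C}}N$.} I will use formula (\ref{eq:eq-6-1}) with arbitrary positive weights $\varrho$. The entries of $N$ are $n_{ik}=\sigma_{1}u_{1,i}v_{1,k}$, so
\[
\frac{n_{ik}}{n_{jk}}=\frac{u_{1,i}}{u_{1,j}}
\]
is independent of $k$. Therefore
\[
r_{ij}=\Bigl[\prod_{k=1}^{n}\Bigl(\frac{u_{1,i}}{u_{1,j}}\Bigr)^{\varrho_{k}}\Bigr]^{1/|\varrho|}=\frac{u_{1,i}}{u_{1,j}},
\]
that is, $\mathcal{P}_{\mathcal{C}}N=u_{1}\oslash u_{1}$. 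By Definition \ref{def:def-8-1} this is exactly $\mathcal{T}_{\mathcal{C}}M$.

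Equivalently, through Lemma \ref{lem:lem-6-2}: we have $\ln N=(\ln\sigma_{1}e+\ln u_{1})\oplus\ln v_{1}$, and applying $\mathcal{P}_{\mathcal{A}}$ via Lemma \ref{lem:lem-4-1} kills the column-constant part and leaves $\ln u_{1}\ominus\ln u_{1}$. The identity holds for every weight vector, which confirms that $\mathcal{T}_{\mathcal{C}}$ is the composite of the two distance-based projections $\mathcal{P}_{\mathcal{C}}$ and $\mathcal{T}_{\mathcal{G}}$.
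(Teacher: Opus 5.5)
Your proof is essentially the paper's. Eckart--Young--Mirsky gives $\mathcal{T}_{\mathcal{G}}M=\sigma_{1}u_{1}\otimes v_{1}$, and in (\ref{eq:eq-6-1}) the ratio $n_{ik}/n_{jk}=u_{1i}/u_{1j}$ is independent of $k$, so $\mathcal{P}_{\mathcal{C}}(\mathcal{T}_{\mathcal{G}}M)=u_{1}\oslash u_{1}=\mathcal{T}_{\mathcal{C}}M$. Your normalization $u_{1},v_{1}>0$ is a welcome addition that the paper leaves implicit: it is what places $\sigma_{1}u_{1}\otimes v_{1}$ in $\mathcal{G}_{n}$ and makes (\ref{eq:eq-6-1}) applicable.

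You should drop or reword the closing ``equivalently'' paragraph, though. Lemma \ref{lem:lem-6-2} covers only PC inputs and Lemma \ref{lem:lem-4-1} only skew-symmetric ones, and $\ln N$ is not skew-symmetric. The genuine weighted-Frobenius orthogonal projection of $\ln N$ onto $\mathcal{A}_{n}$ is that of its skew part, namely $\tfrac{1}{2}(c\ominus c)$ with $c=\ln u_{1}-\ln v_{1}$. This differs from $\ln u_{1}\ominus\ln u_{1}$ unless $u_{1i}v_{1i}$ is constant in $i$. What you actually apply there is only the row-mean formula (\ref{eq:eq-4-3}), i.e.\ the extended definition (\ref{eq:eq-6-1}), and that is all your main argument needs.
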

\begin{proof}
\noindent If we denote ${\mathcal{T}}_{\mathcal{G}}M=\left[s_{ij}\right],$
then it follows from the Eckart-Young-Mirsky theorem that

\noindent
\[
s_{ij}={\sigma}_{1}u_{1i}v_{1j}\ \ \left(1\le i,j\le n\right),
\]

\noindent where $u_{1}={(u_{11},\dots,u_{1n})}^{T}$ and $v_{1}={(v_{11},\dots,v_{1n})}^{T}.$
Further, by applying Lemma \ref{lem:lem-4-1} and (\ref{eq:eq-4-2})
to the matrix $\mathcal{T}_{\mathcal{G}}M=\left[s_{ij}\right]$ we
conclude that the elements $r_{ij}$ of the matrix $\mathcal{P}_{\mathcal{C}}\left(\mathcal{T}_{\mathcal{G}}\mathrm{M}\right)$
are equal to 
\[
r_{ij}=\left[\prod^{n}_{k=1}(\frac{\sigma_{1}u_{1i}v_{1k}}{\sigma_{1}u_{1j}v_{1k}})^{\varrho_{k}}\right]^{1/\left|\varrho\right|}=\frac{u_{1i}}{u_{1j}}.
\]

\noindent Thus it follows from Definition \ref{def:def-8-1} that

\noindent
\[
{{\mathcal{P}}_{\mathcal{C}}\left({\mathcal{T}}_{\mathcal{G}}\mathrm{\ }\mathrm{M}\right)=u_{1}{\oslash u}_{1}={\mathcal{T}}_{\mathcal{C}}M.\ }
\]

\noindent This completes the proof.
\end{proof}
\noindent The next lemma is crucial for the proof of a new characterization
of the eigenvector and tensor projections. It shows that the Saaty
and SVD windowing forms coincide on the supergroup $G_{n}=\left(G_{n},\cdot\right)$
of tensor products of all positive vectors in ${\mathbb{R}}^{n}.$
Thus, it is another theoretical argument for the conclusion of Golany
and Kres \citep{Golany1993ameo} that neither windowing method is
superior in all cases.
\begin{lemma}
\noindent\label{lem:lem-8-6}If $a,b$ are\textbf{ }positive vectors
in $\mathbb{R}^{n}$, then we have 
\[
\mathcal{T}_{\mathcal{C}}\left(a\otimes b\right)=\mathcal{R}_{\mathcal{C}}\left(a\otimes b\right)=a\oslash a.
\]
\end{lemma}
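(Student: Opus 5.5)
We prove the two equalities separately. In each case we identify the relevant vector (Perron eigenvector, or leading left singular vector) of the rank-one matrix $M=a\otimes b=ab^{T}$ as a positive multiple of $a$. Both projections depend on their vector only up to a positive factor, because $(ca)\oslash(ca)=a\oslash a$ for $c>0$. So each equality reduces to showing that the relevant vector is proportional to $a$.

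\emph{Saaty projection.} We use the endomorphism identity from Section \ref{sec:tensor-notation-of}, $(a\otimes b)(z)=\langle z,b\rangle a$. Taking $z=a$ gives
\[
(a\otimes b)a=\langle a,b\rangle a,
\]
so $a>0$ is an eigenvector with eigenvalue $\langle a,b\rangle>0$. The matrix $a\otimes b$ has rank one, so every other eigenvalue is $0$. Hence $\langle a,b\rangle=\lambda_{max}$, and $a$ is a positive eigenvector for it. The same conclusion follows from Remark \ref{rem:rem-7-1}: a positive eigenvector must belong to $\lambda_{max}$. By the uniqueness part of Perron's theorem, $x^{+}=ca$ for some $c>0$. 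Definition \ref{def:def-7-2} then gives $\mathcal{R}_{\mathcal{C}}(a\otimes b)=a\oslash a$.

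\emph{SVD projection.} We compute
\[
MM^{T}=ab^{T}ba^{T}=\|b\|^{2}\,a\otimes a,
\]
which is a rank-one positive semidefinite matrix. It satisfies $MM^{T}a=\|b\|^{2}\|a\|^{2}a$, and it annihilates $a^{\perp}$. So $\sigma_{1}=\|a\|\|b\|>0$, $\sigma_{2}=\dots=\sigma_{n}=0$, and by (\ref{eq:eq-8-2}) the singular vector $u_{1}$ spans the one-dimensional eigenspace $\mathrm{span}\{a\}$, that is, $u_{1}=\pm a/\|a\|$.

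The main obstacle is this sign ambiguity. Definition \ref{def:def-8-1} implicitly takes $u_{1}>0$, and we need $u_{1}\oslash u_{1}$ to be a PC matrix. However, $a\oslash a$ is unchanged under $a\mapsto -a$, since $(-a_{i})/(-a_{j})=a_{i}/a_{j}$. So either sign yields $\mathcal{T}_{\mathcal{C}}(a\otimes b)=a\oslash a$, and nothing further needs to be checked. Alternatively, we can invoke Theorem \ref{thm:thm-8-2}: the Eckart–Young–Mirsky approximation of a rank-one matrix is the matrix itself, so $\mathcal{T}_{\mathcal{G}}(a\otimes b)=a\otimes b$. Then
\[
\mathcal{P}_{\mathcal{C}}(a\otimes b)=\Bigl[\prod_{k}\bigl(a_{i}b_{k}/(a_{j}b_{k})\bigr)^{\varrho_{k}}\Bigr]^{1/|\varrho|}=a_{i}/a_{j}.
\]
This confirms the SVD equality without any discussion of signs.

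Combining the two computations gives $\mathcal{T}_{\mathcal{C}}(a\otimes b)=\mathcal{R}_{\mathcal{C}}(a\otimes b)=a\oslash a$.
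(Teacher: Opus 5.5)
Your proof is correct and follows essentially the paper's route: $a$ is a positive eigenvector of the rank-one matrix $a\otimes b$ for its only nonzero eigenvalue $\langle a,b\rangle$, and the leading left singular vector is $a/\|a\|$. The one small difference is in the SVD part: you read off the spectrum of $MM^{T}=\|b\|^{2}a\otimes a$ directly, whereas the paper verifies the singular triple $(\|a\|\|b\|,\,a/\|a\|,\,b/\|b\|)$ and invokes Remark~\ref{rem:rem-7-1} to see that $\sigma_1$ is maximal. Your version also pins down $u_1$ up to sign, which the paper leaves implicit.
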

\begin{proof}
\noindent The matrix $C=a\otimes b=\left[a_{i}b_{j}\right]$ has the
rank $1$. Hence we have 
\[
\det\left(C-\lambda I\right)=(-\lambda)^{n}+\left\langle a,b\right\rangle (-\lambda)^{n-1},
\]

\noindent because all other coefficients of this characteristic polynomial
of $C$ disappear as sums of higher rank minors. Hence $C$ has eigenvalues
$\lambda_{1}=\left\langle a,b\right\rangle $ and $\lambda_{2}=\dots=\lambda_{n}=0.$
Moreover, we have

\noindent
\[
\left(a\otimes b\right)a=a\left(b^{T}a\right)=\left\langle a,b\right\rangle a.
\]

\noindent Thus $a$ is the eigenvector of $a\otimes b$ corresponding
to the largest eigenvalue $\left\langle a,b\right\rangle $. By Definition
\ref{def:def-7-2} it proves the second identity 
\[
\mathcal{R}_{\mathcal{C}}\left(a\otimes b\right)=a\oslash a.
\]

\noindent To prove the first one, we set $u_{1}=a/\sqrt{\left\langle a,a\right\rangle },\ \ v_{1}=b/\sqrt{\left\langle b,b\right\rangle },\ \sigma_{1}=\sqrt{\left\langle a,a\right\rangle \left\langle b,b\right\rangle }$
and note that 
\[
Cv_{1}=ab^{T}b/\sqrt{\left\langle b,b\right\rangle }=a\sqrt{\left\langle b,b\right\rangle }=\sigma_{1}u_{1},
\]

\noindent and similarly 
\[
C^{T}u_{1}=ba^{T}a/\sqrt{\left\langle a,a\right\rangle }=\sigma_{1}v_{1}.
\]

\noindent Hence, it follows from (\ref{eq:eq-8-2}) that $u_{1},v_{1}$
are singular vectors of $C$ corresponding to the singular value $\sigma_{1}$.
Additionally, $u_{1}$ is a positive eigenvector for the eigenvalue
$\sigma^{2}_{1}$ of $CC^{T}$. Thus, by Remark \ref{rem:rem-7-1},
$\sigma^{2}_{1}$ is the largest eigenvalue, and so $\sigma_{1}$
is the largest singular value of $C=a\otimes b$. Now, Definition
\ref{def:def-8-1} can be applied to obtain 
\[
\mathcal{T}_{\mathcal{C}}(a\otimes b)=u_{1}\oslash u_{1}=a\oslash a,
\]

\noindent which completes the proof.
\end{proof}
\begin{theorem}
\noindent\label{thm:thm-8-7}The SVD projection $\mathcal{T}_{\mathcal{C}}$
is a composite projection of the eigenvector and tensor projections
$\mathcal{R}_{\mathcal{C}}$ and $\mathcal{T}_{\mathcal{G}}$: 
\[
\mathcal{T}_{\mathcal{C}}M=\mathcal{R}_{\mathcal{C}}\left(\mathcal{T}_{\mathcal{G}}M\right)=u_{1}\oslash u_{1}
\]

\noindent for every PC matrix $M$ in $\mathbb{R}^{n\times n}$.
\end{theorem}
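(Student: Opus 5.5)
The plan is to combine the Eckart--Young--Mirsky theorem with Lemma~\ref{lem:lem-8-6}, in the same way Theorem~\ref{thm:thm-8-2} combined it with the logarithmic projection. Let $M$ be a PC matrix with singular value decomposition $M=U^{T}\Sigma V$. First, I will identify the tensor projection: by the Eckart--Young--Mirsky theorem, $\mathcal{T}_{\mathcal{G}}M=\sigma_{1}u_{1}\otimes v_{1}$.

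Next, I will check that this rank-one matrix actually lies in $\mathcal{G}_{n}$, i.e.\ that $u_{1}$ and $v_{1}$ can be chosen positive. Since $M>0$, the matrix $MM^{T}$ is positive. By (\ref{eq:eq-8-2}), $u_{1}$ is an eigenvector of $MM^{T}$ for the largest eigenvalue $\sigma_{1}^{2}$, and the Perron theorem (with Remark~\ref{rem:rem-7-1}) shows it is simple, so $u_{1}$ may be taken positive. Then $v_{1}=\sigma_{1}^{-1}M^{T}u_{1}>0$. This sign normalisation is the only point requiring care, because the singular vectors are determined only up to a common sign. It does not affect $u_{1}\oslash u_{1}$, since replacing $u_{1}$ by $-u_{1}$ leaves the entrywise quotients unchanged.

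Then I will apply Lemma~\ref{lem:lem-8-6} with $a=\sigma_{1}u_{1}>0$ and $b=v_{1}>0$:
\[
\mathcal{R}_{\mathcal{C}}\left(\mathcal{T}_{\mathcal{G}}M\right)=\mathcal{R}_{\mathcal{C}}\left((\sigma_{1}u_{1})\otimes v_{1}\right)=(\sigma_{1}u_{1})\oslash(\sigma_{1}u_{1})=u_{1}\oslash u_{1},
\]
where the positive factor $\sigma_{1}$ cancels in the quotients $\sigma_{1}u_{1i}/\sigma_{1}u_{1j}$.

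Finally, Definition~\ref{def:def-8-1} gives $\mathcal{T}_{\mathcal{C}}M=u_{1}\oslash u_{1}$, which yields the claimed identity. Theorem~\ref{thm:thm-8-2} provides the same value through $\mathcal{P}_{\mathcal{C}}$, consistent with this. No step is expected to be a genuine obstacle beyond the positivity argument above.
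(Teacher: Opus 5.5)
Your proposal is correct and follows the paper's proof. The paper likewise writes $\mathcal{T}_{\mathcal{G}}M$ via Eckart--Young--Mirsky as a tensor product and then invokes Lemma~\ref{lem:lem-8-6}; it splits the factor as $(\sqrt{\sigma_{1}}u_{1})\otimes(\sqrt{\sigma_{1}}v_{1})$ rather than $(\sigma_{1}u_{1})\otimes v_{1}$, and it leaves implicit your Perron argument that $u_{1},v_{1}$ can be chosen positive, which is needed both for the hypothesis of Lemma~\ref{lem:lem-8-6} and for $\sigma_{1}u_{1}\otimes v_{1}$ to lie in $\mathcal{G}_{n}$.
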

\begin{proof}
\noindent In view of the Eckart-Young-Mirsky theorem we have 
\[
\mathcal{T}_{\mathcal{G}}M=\left(\sqrt{\sigma_{1}}u_{1}\right)\otimes\left(\sqrt{\sigma_{1}}v_{1}\right).
\]

\noindent Hence, one can apply Lemma \ref{lem:lem-8-6} to finish
the proof.
\end{proof}

\section{Conclusions}

\noindent In this study, we have constructed a first orthogonal basis
for the space ${\mathcal{A}}_{n}$ of all additively consistent matrices
in ${\mathbb{R}}^{n\times n}.$ It has solved the problem stated in
the year 1997 by Koczkodaj and Orłowski in a paper \citep{Koczkodaj1997aobf}
on pairwise comparisons. Recently, this problem has been considered
in \citep{Benitez2024ceof,Koczkodaj2020oopo}, but explicit formulae
for the orthogonal basis of $\mathcal{A}_{n}$ have not been found
until now. The present successful attempt has been made possible due
to a tensor description of the explicit formulae \citep{Koczkodaj2020oopo,Smarzewski2020cpai}
for the best additively consistent approximations of skew-symmetric
matrices.

\noindent The orthogonal basis has shed new light not only on the
orthogonal windowing of PC matrices, but also on the Saaty and SVD
windowing. These three kinds of PC windowing have been discussed both
from the theoretical and computational points of view. As far as we
know, the first two were introduced in \citep{Koczkodaj1997aobf,Saaty1980tahp}
and have been used in pairwise comparisons ever since. The third one
was inspired by an interesting paper of Gass and Rapcsak \citep{Gass2004svdi}.
In the previous section, we showed that the SVD widowing is a unique
method which is both norm and eigenvector based. Therefore, it shares
properties of the windowing of orthogonal and Saaty types.

\noindent Further theoretical studies along these lines are needed,
as well as numerical comparisons of the orthogonal and Saaty windowing
with the SVD windowing of PC matrices in real-world applications.
Of course, such numerical experiments may take into account existing
comparisons of the logarithmic least squares method with the eigenvector
method presented in hundreds of papers and books, for example, in
\citep{Bozoki2008osak,Golany1993ameo,Saaty1980tahp,Saaty1984coel,Sipahi2010tahp}.

\noindent Although it is a common approach in pairwise comparisons
to use only the Frobenius norm \citep{Magnot2023agmf}, it seems to
be too restrictive. In fact, an appropriate choice of the norm is
a very important problem in real-world applications of general approximation
theory, because it may significantly improve or worsen the quality
of approximations, see e.g. Karlin and Studden \citep{Karlin1966tswa},
Laurent \citep{Laurent1975aao}, and Szego \citep{Szego1939op}. It
would be strange if the same were not true in the special case of
consistent approximations in pairwise comparisons. Therefore, further
studies of this problem are needed, at least for the most important
case of the spectral matrix norm \citep{Perron1907ctdm}. Here we
only mention that some changes of norm may significantly change the
consistent approximations \citep{Koczkodaj2020oopo,Smarzewski2020cpai}.

\noindent Finally, we briefly consider the problem of constructing
an additively consistent basis for ${\mathcal{A}}_{n}$, which is
orthogonal with respect to a weighted inner product in ${\mathbb{R}}^{n\times n}$
defined by

\noindent
\[
\left\langle X,Y\right\rangle _{W}=\sum^{n}_{i,j=1}w_{ij}x_{ij}y_{ij}\,\,\,\,\text{for}\,\,X,Y\in\mathbb{R}^{n\times n},
\]

\noindent where ${W=[w}_{ij}]>0.$ In this case, an orthogonal basis
may be computed by applying Gram-Schmidt $\left\langle \cdot,\cdot\right\rangle _{W}$
-- orthogonalization either to the basis $\left\{ A_{1},\dots,A_{n-1}\right\} $
or to the orthogonal basis $\left\{ B_{1},\dots,B_{n-1}\right\} $
presented in Sections \ref{sec:Characterization-of-additive} and
\ref{sec:Orthogonal-basis-of}. The same assertion is evidently true
for any inner product in $\mathbb{R}^{n\times n}$.

\section*{Data Availability Statement}

Data sharing is not applicable to this article as no datasets were
generated or analyzed during the current study.

\section*{Acknowledgment}

Konrad Kułakowski has been supported by the National Science Centre,
Poland within the grant VIRGO 2024/55/B/HS4/00860. 

\bibliographystyle{plain}
\bibliography{papers_biblio_reviewed}

\end{document}